\theoremstyle{plain}
\newtheorem{theorem}{Theorem}
\newtheorem{fact}{Fact}
\newtheorem{rslt}[theorem]{Result}
\newtheoremstyle{note}{\topsep}{\topsep}{\slshape}{}{\scshape}{}{ }{}
\theoremstyle{note}
\newcommand{\mbV}{\mathbb{V}}
\newcommand\tr{\operatorname{Tr}}
\newcommand{\<}{\langle}
\renewcommand{\>}{\rangle}
\newcommand\be{\begin{equation}}
\newcommand\ee{\end{equation}}
\newcommand\bea{\begin{array}}
\newcommand\eea{\end{array}}
\newcommand\ben{\begin{eqnarray}}
\newcommand\een{\end{eqnarray}}
\newcommand\ot{\otimes}
\newcommand\bei{\begin{itemize}}
\newcommand\eei{\end{itemize}}
\newcommand\bee{\begin{enumerate}}
\newcommand\eee{\end{enumerate}}
\newcommand{\ket}[1]{| #1 \rangle}
\newcommand{\bra}[1]{\langle #1 |}
\newcommand{\E}{\operatorname{e}}
\newcommand{\mathsym}[1]{{}}
\newcommand{\unicode}[1]{{}}
\def\<{\langle}
\def\>{\rangle}
\def\ot{\otimes}
\newcommand{\la}{\langle}
\newcommand{\ra}{\rangle}
\newcommand{\pd}{\partial}
\newcommand{\ii}{\mathbf{i}}
\newcommand{\hc}{\dagger}
\newcommand{\SP}{\text{ }}
\newsavebox{\smlmat}
\savebox{\smlmat}{$\left[\begin{smallmatrix}p&\alpha \\\alpha^{*}&\widetilde{p}\end{smallmatrix}\right]$}
\begin{document}
\title{Generic appearance of objective results in quantum measurements}

\author{J. K. Korbicz}
\email{jkorbicz@mif.pg.gda.pl}
\affiliation{Faculty of Applied Physics and Mathematics, Gda\'nsk University of Technology, 80-233 Gda\'nsk}
\affiliation{National Quantum Information Centre in Gda\'nsk, 81-824 Sopot, Poland}
\author{E. A. Aguilar}
\affiliation{Faculty of Mathematics, Physics and Informatics University of Gda\'nsk, 80-952 Gda\'nsk, National Quantum Information Centre in Gda\'nsk, 81-824 Sopot, Poland}
\author{P. \'Cwikli\'nski}
\affiliation{Faculty of Mathematics, Physics and Informatics University of Gda\'nsk, 80-952 Gda\'nsk, National Quantum Information Centre in Gda\'nsk, 81-824 Sopot, Poland}
\author{P. Horodecki}
\affiliation{Faculty of Applied Physics and Mathematics, Gda\'nsk University of Technology, 80-233 Gda\'nsk}
\affiliation{National Quantum Information Centre in Gda\'nsk, 81-824 Sopot, Poland}

\date{\today}
\pacs{05.30.-d 03.67.-a 03.65.Yz}
\keywords{Quantum Measurement Problem, Objectivity, Random Matrix Theory}

\begin{abstract}
Measurement is of central interest in quantum mechanics as it provides the link between the quantum world
and the world of everyday experience. One of the features of the latter is its robust, objective character, contrasting the delicate nature of quantum systems.
Here we analyze in a completely model-independent way the celebrated von Neumann measurement process, using recent techniques of information flow, studied in open quantum systems.
We show the generic appearance of objective results in quantum measurements, provided
we macroscopically coarse-grain the measuring apparatus and wait long enough. To study genericity, we employ the widely-used Gaussian Unitary Ensemble
of random matrices and the Hoeffding inequality.
We derive  generic  objectivization timescales, given solely by the interaction strength and the systems' dimensions. 
Our results are manifestly universal and are a generic property of von Neumann measurements.
\end{abstract}

\maketitle

Understanding quantum measurements has been one of the central problems of quantum theory since its beginning  
\cite{Bohr1949-BOHDWE,heisenberg1952philosophic}. It not only provides the crucial link between the theory and experiment, the micro- and macro-worlds, 
but is at the heart of the modern quantum technologies (see e.g. \cite{measurement2009}).
The fundamental measurement theory dates back to von Neumann \cite{Neumann_book} and 
since then has been further developed in various directions, e.g. the decoherence theory \cite{Schlosshauer_book, joos2003decoherence}.
To be readable, measurement results must inevitably be encoded into macroscopic degrees of freedom
and one of the crucial features expected from a good measurement process is an objective character of the results: They can be read out  
by arbitrary many observers and without causing any disturbance by the mere read-out. 
This has been realized as early as in 1929 by Mott \cite{Mott}.
Achieved in well engineered measurements by a proper coupling to macroscopic degrees of freedom, 
it is not at all obvious if such a situation is a generic feature of a quantum measurement process with a macroscopic recording.

In a broader context of open quantum systems \cite{Schlosshauer_book, joos2003decoherence}, 
this may be seen as a question about how information flows from the system to its environment. 
Pioneering research along this direction has been undertaken under the 
quantum Darwinism idea \cite{Zurek2009_darwinism}, 
arguing that in some situations (see e.g. \cite{PhysRevLett.105.020404, PhysRevLett.101.240405}) perfect information about the system can be redundantly stored in  the environment
and becomes effectively classical \cite{Ollivier2004} and objective. The generic character of some of the quantum Darwinism features 
was shown in \cite{Brandao2015_darwinism} and the universality of decoherence was shown on short time-scales 
in \cite{Braun2001_decoherence, Strunz2003_decoherence, Strunz2003_1_decoherence, Yukalov2012_decoherence}. 
A further step was recently made in \cite{Korbicz2014_Objectivity,Horodecki2015_objectivity} by formulating information
flow and objectivity in the fundamental language of quantum states with the
introduction of the, so called, Spectrum Broadcast Structures (SBS's).
The latter has been proven to be a useful tool 
allowing to obtain novel results  in some of the emblematic models of decoherence
\cite{Korbicz2014_Objectivity, Tuziemski2015_objectivity, Tuziemski2016_sbs, MKH16}. 
Finally, questions of genericity have traditionally been the domain of statistical mechanics and  thermodynamics (see e.g.  \cite{Popescu-thermo, Masanes2011-thermo, Brandao2011-thermo}).
Phrased in this language, we may ask to what form a generic state equilibrates during a von Neumann measurement. 

In this communication we study information flow during a  von Neumann measurement process with a macroscopic (in a sense of a number of degrees of freedom)
measuring apparatus. Applying  random matrix theory techniques \cite{Mehta_book, Haake_book},
we show that generically the  post-measurement state approaches, after a coarse-graining, a form carrying almost perfect, multiple records of the measurement result,
thus making the latter objective. To study genericity, we use a
properly structured  Gaussian Unitary Ensemble (GUE) \cite{Mehta_book,Haake_book}.
Since the seminal works of Wigner and Dyson on statistics of various experimentally observed spectra, it has been the basic choice for random Hamiltonians
due to its universality and agreement with the experiment \cite{Mehta_book,Haake_book}.
The apparatus is assumed to be noisy, with the initial state distributed according to some physically motivated 
measures of mixed states \cite{Sommers2004_purity}. For large-dimensional measured systems,
we provide estimates on the time-scale of the objectivization process. 
Since the only assumptions we make concern the genericity measures, our results are manifestly  universal and apply to the whole class of von Neumann measurements,
thus showing a generic and robust character of the emergence of objectivity. It is a bit of a surprise that this property of von Neumann measurements was so far tacitly assumed
(see e.g. \cite{SEWELL2005271}) but never, to our best knowledge, derived.

{\bf Measurements with compound apparatuses.--} (cf. \cite{Korbicz2014_Objectivity})
We consider  a $d_S$-dimensional quantum system $S$ simultaneously measured by a collection of $N$ measuring apparatuses/environments 
$E_1$,\ldots, $E_N$, each of dimension $d$, representing a macroscopic measuring device.
The apparatuses are assumed to be individually coupled to the system through a general von Neumann-type interaction, so strong
that the self Hamiltonians of the system and the apparatuses can be neglected (the quantum measurement limit) \cite{Neumann_book}:
\be
\label{ham}
\hat H_{total}\approx\hat H_{int}= \hat{A} \ot \sum_{k=1}^{N} \hat{B}_k,
\ee
where  $\hat{A}$ is the measured observable (assumed non-degenerate) 
and the $\hat{B}_k$ are some general measuring observables. 
This leads to the evolution (setting $\hbar=1$) 
$\hat{U} \equiv \E^{-i t\hat{H}_{int}}= \sum_a |a\> \<a| \ot \bigotimes_{k=1}^{N} \E^{-i a\hat{B}_k t}$, 
where $\hat A=\sum_{a=1}^{d_S} a \ket a \bra a$. 
Our main object of study is a partially
reduced state $\rho_{S:E_{obs}}$, with  a fraction $E_{uno}$ of size $N_{uno}$ of unobserved subsystems traced out. 
This represents an inevitable loss of information during a measurement.
Assuming 
$\rho_{SE}(0)=\rho_{0S}\ot\bigotimes_{k=1}^N \rho_{0k}$ we obtain:
\begin{eqnarray}\label{ptr}
&&\rho_{S:E_{obs}}(t)=\sum_a p_a |a\> \<a| \ot \bigotimes_{k=1}^{N_{obs}} \rho_{ak}(t)+ \sum_a\sum_{a'\ne a} c_{aa'}\\
&&\times \left\{ \prod_{k=1}^{N_{uno}} \tr [\E^{- i (a-a')\hat{B}_k t} \rho_{0k}]\right\}|a\> \<a'| \bigotimes_{k=1}^{N_{obs}} \E^{-i a\hat{B}_k t}\rho_{0k} \E^{i a'\hat{B}_k t}, \nonumber
\end{eqnarray}
where $p_a\equiv \langle a|\rho_{0S}|a\rangle$, $c_{aa'}\equiv \<a|\rho_{0S}|a'\>$, $\rho_{ak}(t)\equiv \E^{-i a\hat{B}_k t}\rho_{0k} \E^{i a\hat{B}_k t}$ , $N_{uno}+N_{obs}=N$.
We define the decoherence factor for the unobserved fraction $E_{uno}$:  
\be
\label{eq:deco2}
\Gamma^{uno}_{aa'}(t) \equiv \prod_{k=1}^{N_{uno}} \left|\tr [\E^{- i (a-a')\hat{B}_k t} \rho_{0k}]\right|^2.
\ee
If for all $a\ne a'$: i) $\Gamma^{uno}_{aa'}(t)= 0$, i.e. decoherence takes place, and 
ii)  $\rho_{ak}(t)\perp\rho_{a'k}(t)$, i.e. $\rho_{ak}(t)$ are perfectly distinguishable, then we say that $\rho_{S:E_{obs}}(t)$ is of a  Spectrum Broadcast Structure (SBS) 
\cite{Korbicz2014_Objectivity, Horodecki2015_objectivity,Tuziemski2015_objectivity} 
with respect to (w.r.t.) the basis $\ket a$ (this context-dependence is of a fundamental importance, see e.g. \cite{Auffeves2016_onto}), defined as \cite{PhysRevA.86.042319}:
\be\label{SBS}
\rho_{SBS}=\sum_a p_a \ket a\bra a\otimes \rho_a\otimes\dots\otimes\rho_a,\  \rho_a\perp\rho_{a'\ne a}.
\ee
The basis $\ket a$ becomes then the, so-called, pointer basis in which the system has decohered and the result of the measurement, $a$, appearing with the probability $p_a$, 
is stored in the measuring setup in many,  perfect  copies. Crucially, their readouts, through projections on the supports of $\rho_{ak}(t)$, will not  disturb (on average) the joint state $\rho_{S:E_{obs}}(t)$.
This leads to a form of objectivity of the measurement result: It can be read out by multiple observers without disturbing neither the (decohered) system nor themselves
\cite{Zurek2009_darwinism, Korbicz2014_Objectivity, Horodecki2015_objectivity}. 
In quantum-information terms, this objectivization process is a weaker form of quantum state broadcasting \cite{no-broadcast, PianiPRL2008}. 
We can thus reformulate the original question as: Are SBS's generic for the interactions \eqref{ham}?
To address it, we introduce an ensemble of random Hamiltonians of the form \eqref{ham} and random initial conditions $\rho_{0k}$.
We then estimate the average trace distance between the actual state \eqref{ptr} and an ideal SBS in the following steps:
i) calculate the  averages over $\hat B_k$ of the decoherence factor \eqref{eq:deco2} and the, so called, super-fidelity bound
for the states $\rho_{ak}(t)$; ii) average them over $\rho_{0k}$; iii) coarse-grain the apparatus; iv) further average over $\hat A$;
v) use the central result of \cite{MKH16} to bound the average distance and show that it vanishes in the macroscopic limit.
We then use the concentration inequality of Hoeffding \cite{Hoe}, following from the classical Chernoff bound, to show genericity.

The coarse-graining is one of the crucial steps. As we will show, on the microscopic level of 
the individual apparatuses, the residual noise is too strong to allow a SBS formation even asymptotically. 
This can be overcome if we group the $N_{obs}$ observed apparatuses into fractions scaling with $N$ (called macrofractions) 
and pass to the thermodynamic limit $N\to\infty$ \cite{Korbicz2014_Objectivity}.
The number $\mathcal M$ of such groups (assumed for simplicity equal) is irrelevant, provided their sizes $N_{mac}\equiv N_{obs}/\mathcal M$ satisfy 
$N_{mac}\sim N$. These macrofractions may be understood as reflecting some detection threshold, 
e.g.  a minimum bunch of photons the eye can detect.

{\bf Randomizing measurement Hamiltonians.-- }
We introduce an ensemble of random measurement Hamiltonians \eqref{ham} using the widely-used Gaussian Unitary Ensemble \cite{Haake_book,Mehta_book}
in the following way (cf. \cite{GPKS, CGS}): i) $\hat B_k$ are independently, identically distributed (i.i.d.) according to a GUE with a scale factor $\eta_E$; 
ii)  $\hat A$ is distributed according to its own GUE with a scale factor $\eta_S$. We recall that the GUE measure is defined as:
\be
\label{eq:dis}
{\rm d}\mu_{gue} (\hat{H})=\frac{1}{Z}\E^{-\frac{\eta}{2}\sum_i\lambda_i^2}\prod_{i<j}(\lambda_i-\lambda_j)^2{\rm d}\pmb{\lambda}{\rm d} {\cal \hat{U}},
\ee
with $Z$ the normalization, $\lambda_i$ the eigenvalues, $\eta$ a scale factor, and
${\rm d}{\cal \hat{U}}$ the Haar measure on the unitary group.  

The simultaneous vanishing of the decoherence factor \eqref{eq:deco2} and 
of the generalized overlaps \cite{Jozsa-fidelity,no-broadcast} 
$F_{aa'}\equiv$ $F(\rho_a, \rho_{a'}) \equiv (\tr \sqrt{\sqrt{\rho_a}\rho_{a'}\sqrt{\rho_a}})^2$ for all $a\ne a'$ has so far been used to witness a SBS formation  \cite{Korbicz2014_Objectivity, MKH16}.
The latter function is however complicated and here we will use the so-called super-fidelity bound \cite{Miszczak2009_fidelity}
$F(\rho, \sigma) \leq G(\rho,\sigma)\equiv\tr\left(\rho \sigma\right) + \sqrt{(1-\tr \rho^2)(1-\tr \sigma^2)}$
(although we note that it is not tight if both states are mixed, as e.g. for 
$\rho\perp\sigma$, $G(\rho,\sigma)\ne 0$), which here  reads:
\be
\label{fidbound}
G\left(\rho_a(t), \rho_{a'}(t)\right)=\tr\left(\rho_a(t) \rho_{a'}(t)\right)+  S_{lin}(\rho_0)\equiv G_{aa'}(t),
\ee
where $S_{lin}(\rho_0)\equiv 1-\tr \rho_0^2$ is the linear entropy of the initial state of an individual apparatus.

We now average (\ref{eq:deco2},\ref{fidbound}) over the interaction and the initial conditions.
We first average over $\{\hat B_k\}$, fixing the levels $a,a'$ of $\hat A$. We have:
\be\label{iid}
\langle \Gamma^{uno}_{aa'}(t)\rangle_{\{\hat B_k\}}=\prod_{k=1}^{N_{uno}}\langle |\tr [\E^{- i (a-a')\hat{B}_k t} \rho_{0k}]
|^2\rangle_{\hat B_k},
\ee
since $\hat B_k$ are i.i.d. Modulo $\rho_{0k}$, all the factors
are identical and  we calculate the average over a single $\hat B_k$, dropping the index $k$ for simplicity. 
Performing the Haar integration first (\cite{SM}, Section IA) and then the eigenvalue one (\cite{SM}, Section IIB), 
we obtain \cite{nota}:

\begin{rslt}\label{eav}
The GUE averages of the single environment decoherence  and super-fidelity factors read:
\begin{eqnarray}
&&\< \Gamma_{aa'}(t) \>=  \frac{1+ \tr \rho^2_0 }{d+1}+\< f_t(\pmb a,\pmb{\lambda})\> \frac{2(d - \tr \rho^2_0)}{d(d^2-1)}, \label{eq:GUEavg}\\
&&\< G_{aa'}(t) \>=S_{lin}( \rho_0)+ \frac{1+ \tr \rho^2_0 }{d+1}\nonumber\\
&&+\< f_t(\pmb a,\pmb{\lambda})\>  \frac{2(d\tr \rho^2_0-1)}{d(d^2-1)}, \label{eq:GUEavgB}
\end{eqnarray}
with $f_t(\pmb a,\pmb{\lambda}) \equiv \sum_m\sum_{n>m} \cos\left[(a-a')(\lambda_n-\lambda_m)t\right]$ and:
\begin{eqnarray}\label{fav}
&&\< f_t(\pmb a,\pmb{\lambda})\> = p(d,\tilde{\Delta}_t) \E^{-\tilde{\Delta}_t^2},\\
&&p(d,\tilde{\Delta}_t)\equiv \label{pd}\\
&&\sum_n\sum_{m>n} \left[
L_n^{(0)}(\tilde{\Delta}_t^2)L_m^{(0)}(\tilde{\Delta}_t^2) -
\frac{n!}{m!}\tilde{\Delta}_t^{2(m-n)}
[L_n^{(m-n)}(\tilde{\Delta}_t^2)]^2 \right]\nonumber
\end{eqnarray}
where $\tilde\Delta_t \equiv (a-a')t/\sqrt{\eta_E}$ and $L_n^{(m)}$ are the associated Laguerre polynomials.
\end{rslt}

The above results are exact. Although 
the average $\< f_t(\pmb a,\pmb{\lambda})\>$ with the GUE eigenvalue distribution $P_{gue}(\pmb{\lambda})$
involves only  the two-point correlation function
\cite{Mehta_book}: $R_2(\lambda_1,\lambda_2)\equiv d!/(d-2)!\int\cdots\int {\rm d}\lambda_3\cdots{\rm d} \lambda_d
P_{gue}(\lambda_1,\ldots,\lambda_d)$ (due to the symmetry), and
the large-$d$ asymptotics of  $R_2(\lambda_1,\lambda_2)$ are well known  \cite{Mehta_book}, 
they are of no use here. One can show that \cite{Mehta_book}: $R_2(\lambda_1,\lambda_2)=
K_d(\lambda_1,\lambda_1)K_d(\lambda_2,\lambda_2)-[K_d(\lambda_1,\lambda_2)]^2$, 
$K_d(\lambda_1,\lambda_2)\equiv\sum_{j=0}^{d-1}\phi_j(\lambda_1)\phi_j(\lambda_2)$, with $\phi_j(\lambda)$ the oscillator wave-functions, and
while the first term  approaches the Wigner semicircle distribution,
integrable with $f_t(\pmb a,\pmb{\lambda})$, the second term approaches a function of
$|\lambda_1-\lambda_2|$ only \cite{Mehta_book} and makes the integral divergent. That is the integration
with $f_t(\pmb a,\pmb{\lambda})$ and the large-$d$ limit are not interchangeable here. 

Both (\ref{eq:GUEavg}, \ref{eq:GUEavgB}) depend on $\rho_0$ only through its purity
$\tr\rho_0^2$ and  we can use the known results of generic state purity to effectively get rid of the
initial state dependence. Although there is no canonical choice of a measure over mixed states,
there are several popular ones e.g. the Hilbert-Schmidt and the Bures measures \cite{Sommers2004_purity} giving:
\begin{eqnarray}
\label{eq:HS}
\< \tr \rho^2_0 \>_{HS} = \frac{2d}{d^2+1},\  \< \tr \rho^2_0 \>_{Bu} = \frac{5d^2+1}{2d(d^2+2)}.
\end{eqnarray}
Especially the Bures measure is physically important as it: i) is directly connected to quantum metrology \cite{metrology};
ii) reproduces the correct measure for pure states.  
In what follows we will assume that $\rho_{0k}$ are i.i.d. with one of the above measures and are averaged over.

{\bf Residual noise and coarse-graining.--} As $p(d,\tilde{\Delta}_t)$ is an even polynomial of degree $2(2d-3)$, 
\eqref{fav} implies that the time dependent part in (\ref{eq:GUEavg}, \ref{eq:GUEavgB})   
decays for any fixed  $d$ and  a gap $|a-a'|\ne 0$  with a characteristic time $\tau_{aa'}\equiv |a-a'|^{-1}\sqrt{\eta_E/(d+1)}$ (\cite{SM}, Section II).
The remaining constant terms: 
A common one of the order $O(1/d)$ (cf. (\ref{eq:HS})), called "white noise", 
and additionally $\<S_{lin}( \rho_0)\>$ in \eqref{eq:GUEavgB}. 
The  latter, arising from the non-tight bound (\ref{fidbound}), is intuitively understood---the noisier the apparatus is initially, the lesser information, measured by
the state distinguishability, it can accumulate.
These factors, reflecting residual background fluctuations in the ensemble, pertain to a single apparatus and prevent a SBS formation.
However, coming back to \eqref{eq:deco2}, using \eqref{iid} and (\ref{eq:HS}), we actually obtain an exponential decay
with $N_{uno}$ of the collective decoherence factor:
\be
0\leq\< \Gamma_{aa'}^{uno}(t) \>=\< \Gamma_{aa'}(t) \>^{N_{uno}}\xrightarrow[t\gg\tau_{aa'}]{} O\left(d^{-N_{uno}}\right), \label{gas}
\ee
showing that for a large local dimension $d$ and/or large unobserved fraction $N_{uno}$, measurement dynamics \eqref{ham} generically leads
to decoherence (cf. \cite{Braun2001_decoherence}).  The same step can be preformed on the observed fraction too \cite{Korbicz2014_Objectivity}:
We group the $N_{obs}$ observed apparatuses into $M$ groups of $N_{mac}$ each, described by states 
$\rho_a^{mac}(t)\equiv\bigotimes_{k\in mac}\rho_{ak}(t)$.
Due to the factorization of fidelity w.r.t. the tensor product
and the i.i.d. property, the resulting super-fidelity bound \eqref{fidbound} for the group also decays (cf. (\ref{eq:HS})): 
\be
0\leq \<F^{mac}_{aa'}(t)\>\leq\< G_{aa'}(t) \>^{N_{mac}}\xrightarrow[t\gg\tau_{aa'}]{} O\left(\E^{-\frac{N_{mac}}{d}}\right)\label{ggas}.
\ee
If both $N_{uno}, N_{mac}$ scale with $N$,
(\ref{gas},\ref{ggas}) can be made small in the macroscopic/thermodynamic limit $N\to\infty$.
Crucially, increasing $d$ alone is not enough--it damps the white noise, but
$\<S_{lin}( \rho_0)\>\simeq 1-O(1/d)$ by (\ref{eq:HS}). 

{\bf Generic post measurement state and objectivity.--}
Results (\ref{eq:GUEavg}, \ref{eq:GUEavgB}) still depend on the $\hat A$'s level differences $|a-a'|$.
To study a completely general behavior, a further averaging of 
$\< \Gamma_{aa'}^{uno}(t) \>$, $\< G_{aa'}^{mac}(t) \>$ over
the levels $a,a'$  should be performed with the corresponding two-point correlation function $R_2(a,a')$
(the average is independent of the labels $a,a'$ due to the symmetry). 
The resulting integrals are intractable, but from
(\ref{eq:GUEavg}-\ref{pd}) they will eventually reach 
the noise-floor (see Fig. \ref{fig:exact}).
Lower bounds on the relevant timescales can be obtained from a short-time analysis (\cite{SM}, Section II), giving for the decoherence and 
the superfidelity respectively: 
\begin{eqnarray}
\begin{split}
&&\tau_{dec}\equiv \left[8g^2N_{uno}d_S\left(d-\<\tr\rho_0^2\>\right)\right]^{-\frac{1}{2}}\stackrel{d\gg 1}{\sim}
\frac{g^{-1}}{\sqrt{N_{uno}d_Sd}},\\
&&\tau_{fid}\equiv \left[8g^2N_{mac}d_S\left(d\<\tr\rho_0^2\>-1\right)\right]^{-\frac{1}{2}}\stackrel{d\gg 1}{\sim}
\frac{g^{-1}}{\sqrt{N_{mac}d_S}}.\label{tfid}
\end{split}
\end{eqnarray}
Here $g^{-1}\equiv\sqrt{\eta_S\eta_E}$ is the effective interaction time-scale and $d_S$ is the system dimension. 
We see a characteristic separation of time-scales: From \eqref{eq:HS}, 
$\tau_{fid}\sim\sqrt d\tau_{dec}$ for the same macrofraction sizes. Thus, on average, it takes longer to
accumulate information in the apparatus than to decohere the system \cite{Korbicz2014_Objectivity, Tuziemski2016_sbs}.
Combining (\ref{tfid}) with Result \ref{eav} and (\ref{gas},\ref{ggas}) we arrive at (cf. \cite{distinguishability}):
\begin{rslt}\label{result2} The interaction and initial state averages satisfy:
\begin{eqnarray}
&&\<\<\Gamma^{uno}(t)\>\>\xrightarrow[t\gg\tau_{dec}]{}O\left(\E^{-N_{uno}\log d}\right), \label{r1}\\
&&\<\<\ G^{mac}(t)\>\>\xrightarrow[t\gg\tau_{fid}]{}O\left(\E^{-\frac{N_{mac}}{d}}\right).\label{r2}
\end{eqnarray}
\end{rslt}

\begin{figure}[t]
\centering
\begin{minipage}{.5\columnwidth}
  \centering
\includegraphics[width=0.9\textwidth]{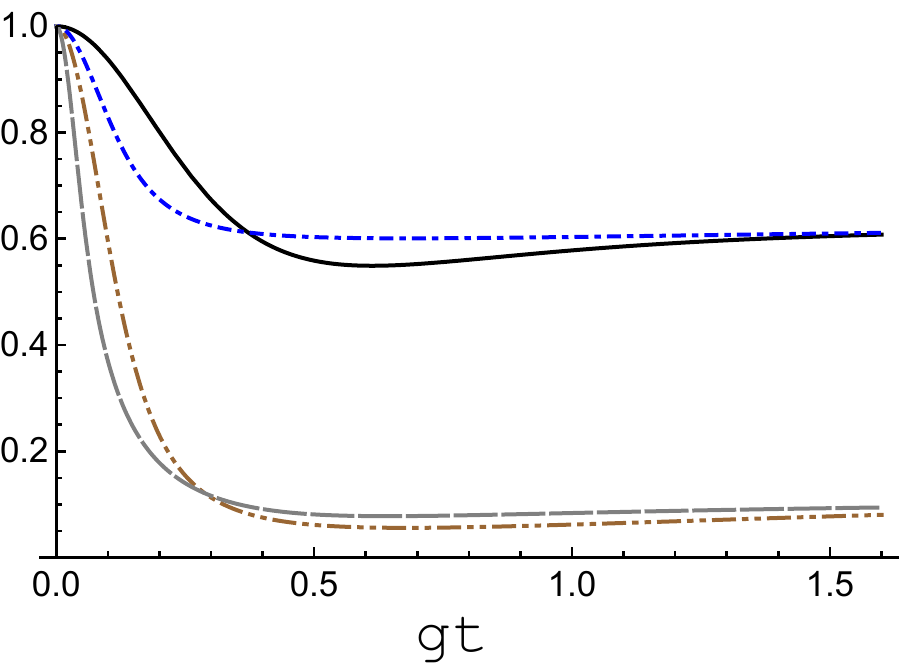}
  \label{fig:decofidd2n1}
    \centerline{(a) $N_{uno}=1$ }
\end{minipage}%
\begin{minipage}{.5\columnwidth}
  \centering
\includegraphics[width=0.9\textwidth]{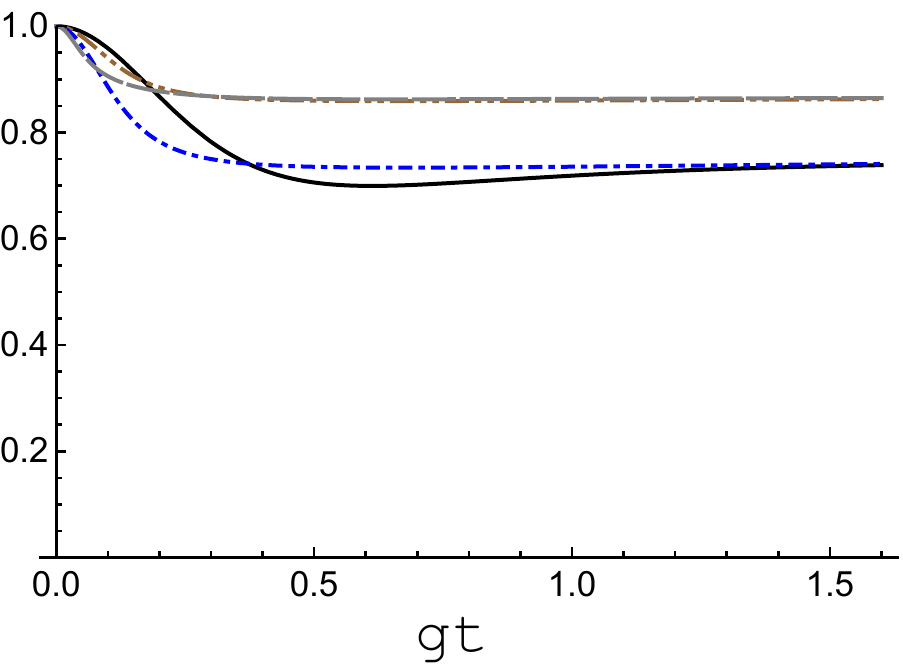}
  \label{fig:decofidd8n1}
    \centerline{(b)  $N_{mac}=1$ }
\end{minipage}\\
\vspace{.5cm}
\begin{minipage}{.5\columnwidth}
  \centering
\includegraphics[width=0.9\textwidth]{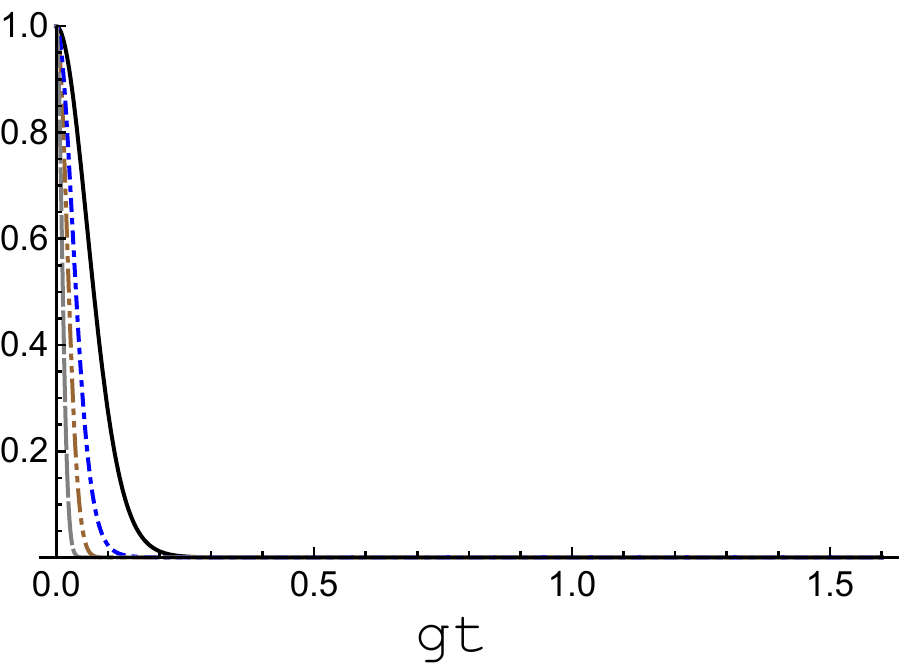}
  \label{fig:decofidd2n20}
    \centerline{(c)  $N_{uno}=20$ }
\end{minipage}%
\begin{minipage}{.5\columnwidth}
  \centering
\includegraphics[width=.9\textwidth]{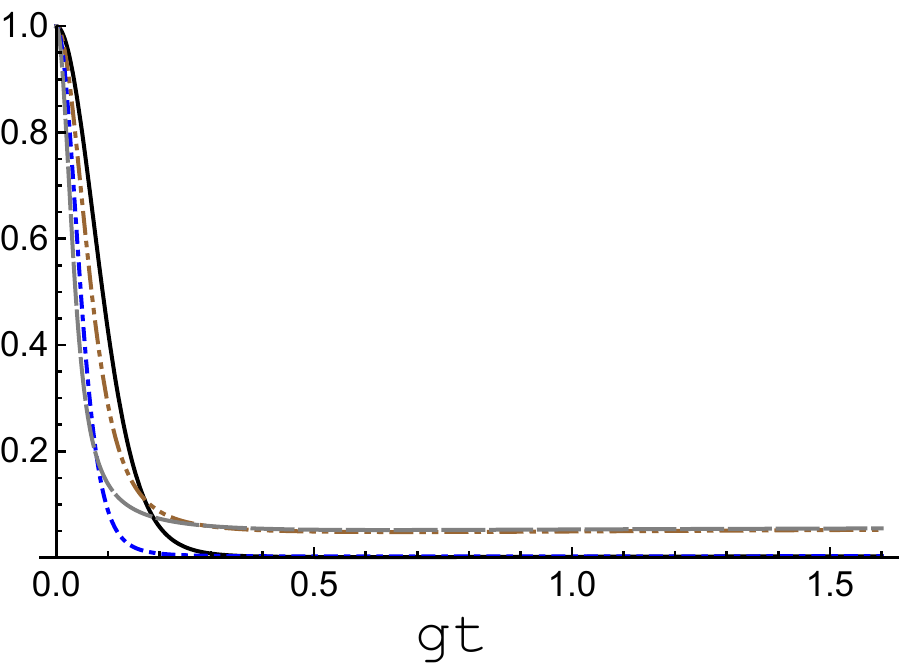}
  \label{fig:decofidd8n20}
    \centerline{(d) $N_{mac}=20$ }
\end{minipage}
\caption{(Color online). Time dependence of the exact full averages of the decoherence factor (a),(c) and the  super-fidelity (b),(d)
 for different dimensions and macrofraction sizes.
The  two-point correlation function $R_2(a,a')$ averages of the exact solutions (\ref{eq:GUEavg}, \ref{eq:GUEavgB}) were used, with $\<\tr \rho^2_0\>_{Bu}$ from \eqref{eq:HS}.
Different combinations of $\{d_S,d\}$ are plotted: $\{2,2\}$ solid black; $\{2,10\}$ brown dot-dot-dash; $\{10,2\}$ blue dot-dash;
$\{10,10\}$ gray long-dash. The time is in  the the units of the interaction strength $g\equiv 1/\sqrt{\eta_S\eta_E}$ \cite{nota2}.}
\label{fig:exact}
\end{figure}

Next crucial step is to use the result of \cite{MKH16} estimating an optimal trace distance between \eqref{ptr} and an ideal SBS state 
on the coarse-grained level of macrofractions:
\begin{eqnarray}
&&\epsilon_{SBS}(t)\equiv \frac{1}{2}||\rho_{S:E_{obs}}(t) - \rho_{SBS}||_{\tr}\label{e}\\
&&\leq \sum_a\sum_{a'\ne a} \left[|c_{aa'}| \sqrt{\Gamma_{aa'}^{uno}(t)}
+\sqrt{p_ap_{a'}}\sum_{mac}\sqrt{F^{mac}_{aa'}(t)}\right]. \nonumber
\end{eqnarray}
Using $p_a,|c_{aa'}|\leq1$, $\<\sqrt{f}\>\leq\sqrt{\< f\>}$ for $f\geq 0$, the super-fidelity bound,
and the Result \ref{result2}, 
estimation \eqref{e} gives:
\begin{rslt}\label{r3}
Averaged over all the von Neumann measurements \eqref{ham} and the initial conditions, the optimal distance of the actual state \eqref{ptr} to an ideal SBS state satisfies:
\be
\<\<\epsilon_{SBS}(t)\>\>\xrightarrow[t\gg\tau_{SBS}]{}O\left[d_S^2\left(\E^{-\frac{N_{uno}}{2}\log d}+\mathcal M\E^{-\frac{N_{mac}}{2d}}\right)\right]
\ee
where $\tau_{SBS}$ is the larger of \eqref{tfid} and $\mathcal M$ is the number of macrofractions into which the observed degrees of freedom of the apparatus are coarse-grained.
\end{rslt}

Finally,  since $0\leq \epsilon_{SBS}(t)\leq 1$ is a bounded random variable for any $t$, it follows from the Hoeffding inequality \cite{Hoe} that:
$P[|\epsilon_{SBS}(t)-\<\<\epsilon_{SBS}(t)\>\>|\geq \delta]\leq 2\E^{-2\delta^2 }$ for any $\delta\geq 0$.
This, together with Result \ref{r3} shows the genericity of the SBS formation for large enough apparatuses and long enough times.

{\bf Conclusions.--}A measurement is an inevitable part of any quantum experiment
and the results must inevitably be encoded into macroscopic degrees of freedom and become effectively classical 
for us to read. This in particular entails becoming objective. We studied this process using the general von Neumann 
measurement scheme (\ref{ham}) with a macroscopic measuring apparatus. 
A huge amount of degrees of freedom ($N \sim 10^{23}$) makes it in practice impossible to observe them all and to control each
individual coupling. A way to model this physical situation is to introduce some randomness and ask questions about genericity. We did it in two steps:
First we randomized the measurement device side  (the observables and the generically noisy initial states) and showed that after including the 
inevitable losses and macroscopic coarse-graining, a post measurement state approaches the so called SBS form asymptotically for almost
any initial conditions and couplings. The timescales of this process depended on the spectral gap of the measured observable on the system side.  
Afterwards, to get rid of this dependence, we went beyond a single experiment scenario, randomizing the measured observable too.
An interesting aspect of that second randomization is that
this may be viewed as a quite natural assumption of any quantum system
interacting with many objects. Indeed it is natural to assume
that it interacts with each of the objects with some fixed,
yet different than with the others, way. Since there are many
objects, then the averaging effect comes from that variety of
the interactions and can be viewed as a self-averaging of the
system plus environment complex. 
This led to our central result: Almost any quantum measurement produces  objective outcomes on the macroscopic level
on the timescale 
given by the larger of \eqref{tfid}. 
This is a universal, model-independent result.

We believe one can go beyond the genericity notion used here (Hoeffding inequality)
and show the concentration of measure phenomenon, e.g. by combining
the results for the Wigner-type matrices \cite{Anderson} with the methods of \cite{Popescu-thermo,distinguishability}.
Another possible future direction is to go beyond the quantum measurement limit and consider 
non-trivial dynamics of the system and the measuring device. A candidate 
tool for such an analysis already exists in the form of dynamical SBS \cite{Tuziemski2015_objectivity}.

\section*{Acknowledgements}
This work was carried out at the National Quantum Information Centre in Gda\'nsk.
We thank J. Wehr, P. Grangier,  M. Horodecki, and R. Horodecki for discussions.
JKK and PH acknowledge the financial support of the John Templeton Foundation
through the grant ID \#56033. EAA is supported by the grant Sonata Bis 2014/14/E/ST2/00020 from National
Science Centre. P\'C thanks the Seventh framework programme EU grant
RAQUEL No 323970 and the grant PRELUDIUM 2015/17/N/ST2/04047 from National
Science Centre.

\bibliographystyle{apsrev}

\newpage
\appendix
\widetext

\setcounter{figure}{0}
\renewcommand{\theequation}{A\arabic{equation}}
\renewcommand{\thefigure}{A\arabic{figure}}

\section{Ensemble average over the apparatus}
\subsection{Average over the Haar distributed unitary transformations U} \label{App:a}
In this Section we average the decoherence and the super-fidelity factors over the Haar measure. 
Due to the assumed independent identical distribution (i.i.d.) of the apparatus observables $B_k$, $k=1,\dots, N$, it is enough to calculate the averages over a single
observable only. This is what we shall calculate, neglecting for brevity the index $k$.
We start with the decoherence factor and prove that:
\begin{theorem}
The decoherence factor for the single copy of the environment average over the Haar distributed unitary transformations $U$ is equal to:
\be
\label{eq:HaarAvg2}
\< \Gamma_{aa'}(t) \>_{\cal U} = |\tr D|^2 \frac{d - \tr [\rho^2_0]}{d(d^2-1)} + \frac{d \tr [\rho^2_0] - 1}{d^2-1}
\ee
where $d$ is the local dimension of the environment.
\end{theorem}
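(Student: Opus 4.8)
The plan is to exploit the spectral factorization of a GUE matrix, $\hat{B}=U\Lambda U^\dagger$, in which the eigenvector matrix $U$ is Haar-distributed on the unitary group and statistically independent of the eigenvalues $\lambda_1,\dots,\lambda_d$. The functional calculus then gives $\E^{-i(a-a')\hat{B}t}=UDU^\dagger$ with the diagonal, unitary matrix $D\equiv\diag(\E^{-i(a-a')\lambda_j t})$, so that the quantity to average becomes $\Gamma_{aa'}(t)=|\tr(UDU^\dagger\rho_0)|^2$; only the integral over $U$ is carried out here, the eigenvalues (hence $D$) being held fixed and integrated in the next subsection. Writing the modulus squared as the trace times its conjugate and expanding everything in matrix indices yields a sum in which $U$ appears twice and $\overline{U}$ appears twice, each factor carrying one ``eigenvalue'' index contracted against a diagonal entry $D_{jj}$ and one ``space'' index contracted against an entry of the Hermitian matrix $\rho_0$. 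Concretely, $\tr(UDU^\dagger\rho_0)=\sum_{i,j,l}U_{ij}D_{jj}\overline{U_{lj}}(\rho_0)_{li}$, and its conjugate uses $\overline{(\rho_0)_{l'i'}}=(\rho_0)_{i'l'}$.

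The core of the computation is a single application of the degree-two Weingarten formula for the unitary group,
\[
\langle U_{a_1b_1}U_{a_2b_2}\overline{U_{c_1d_1}}\,\overline{U_{c_2d_2}}\rangle_{\cal U}=\sum_{\sigma,\tau\in S_2}\Big(\prod_{m}\delta_{a_m c_{\sigma(m)}}\,\delta_{b_m d_{\tau(m)}}\Big)\,\mathrm{Wg}(\tau\sigma^{-1},d),
\]
with the two weights $\mathrm{Wg}(e,d)=1/(d^2-1)$ and $\mathrm{Wg}((12),d)=-1/\big(d(d^2-1)\big)$. Substituting this into the index sum produces exactly four terms labelled by $(\sigma,\tau)\in S_2\times S_2$. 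In each term the Kronecker deltas collapse the free indices in one of two ways: either the two $\rho_0$-entries close up separately, each giving $\tr\rho_0=1$, or they close up jointly, giving $\tr\rho_0^2$; simultaneously the $D$-entries either sum freely to $|\tr D|^2$ or lock together into $\sum_j|D_{jj}|^2=\tr(DD^\dagger)=d$, the latter because $D$ is unitary so $|D_{jj}|=1$.

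Carrying out the four contractions, the $(e,e)$ term gives $|\tr D|^2/(d^2-1)$; the two mixed terms give $-1/(d^2-1)$ and $-\tr\rho_0^2\,|\tr D|^2/\big(d(d^2-1)\big)$; and the $((12),(12))$ term gives $d\,\tr\rho_0^2/(d^2-1)$. Collecting the coefficient of $|\tr D|^2$ yields $(d-\tr\rho_0^2)/\big(d(d^2-1)\big)$, while the remaining constant yields $(d\,\tr\rho_0^2-1)/(d^2-1)$, which is precisely \eqref{eq:HaarAvg2}. The only genuinely delicate point is the bookkeeping of the eight index contractions across the four Weingarten terms: one must keep straight which deltas act on the ``space'' indices (controlling whether $\rho_0$ produces $\tr\rho_0=1$ or $\tr\rho_0^2$) and which act on the ``eigenvalue'' indices (controlling whether $D$ produces $|\tr D|^2$ or $d$). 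Everything else is routine, relying only on $\tr\rho_0=1$, the Hermiticity of $\rho_0$, and the unitarity of $D$.
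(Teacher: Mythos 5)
Your computation is correct and follows essentially the same route as the paper: the degree-two Weingarten formula you invoke is just the index-notation form of the paper's twirling identity $\int \mathrm{d}{\cal U}\, (U\ot U)(\rho_0\ot\rho_0)(U^\dagger\ot U^\dagger)=\tfrac{2}{d(d+1)}\tr[\Pi_{\text{sym}}\rho_0\ot\rho_0]\Pi_{\text{sym}}+\tfrac{2}{d(d-1)}\tr[\Pi_{\text{asym}}\rho_0\ot\rho_0]\Pi_{\text{asym}}$, and your four $(\sigma,\tau)$ contractions reproduce the paper's traces against $D\ot D^\dagger$ (using $\tr\rho_0=1$, $\tr(DD^\dagger)=d$), yielding the identical final coefficients.
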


We first write the decoherence factor as:
\be
\begin{split}
\label{eq:gamma_appendix}
\Gamma_{aa'}(t) &= \left|\tr [\E^{- i (a-a')B t} \rho_{0}]\right|^2 = \tr[\E^{- i (a-a')B t} \rho_{0}] \tr[\E^{- i (a-a')B t} \rho_{0}]^{\dagger} = \tr[\E^{- i (a-a')B t} \rho_{0}] \tr[(\E^{ -i (a-a')Bt} )^{\dagger} \rho_{0}] \\
&= \tr [U DU^{\dagger} \rho_0 \ot U D^{\dagger} U^{\dagger} \rho_0] = \tr [(U^\hc \ot U^\hc) (\rho_0 \ot \rho_0) (U \ot U) (D \ot D^{\dagger})]
\end{split}
\ee
where we diagonalized the observable $B$ as $B=U\textrm{diag}[\lambda_1,\dots,\lambda_d]U^\dagger$ and defined: 
\be\label{D}
D\equiv \text{diag}\left[\E^{-i \Delta_t \lambda_1},\dots,\E^{-i \Delta_t \lambda_d}\right],\quad \Delta_t\equiv(a-a')t. 
\ee
We also used $\tr A\tr B=\tr(A\ot B)$ in the second line and the following fact in the first step:
\begin{fact}
\label{trabab}
For any operator $X$ the following is true
\be
|\tr X|^2 = \tr X \tr X^{\dagger},
\ee
where $ \dagger $ stands for hermitian conjugation.
\begin{proof}
\be
|\tr X|^2 = \tr X \overline{\tr X} = \tr X \overline{\tr X^T} = \tr X \tr X^{\dagger},
\ee
where $\overline{X}$ stands for the complex conjugation of $X$ and we used $\tr X = \tr X^T$, and $\overline{X^T} = X^{\dagger}$.
\end{proof}
\end{fact}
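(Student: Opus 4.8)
The plan is to treat $\tr X$ as an ordinary complex number, unfold the modulus squared, and then transport the complex conjugation inside the trace, identifying the resulting operator with $X^\dagger$. First I would write $|\tr X|^2 = (\tr X)\,\overline{(\tr X)}$, which is nothing but the definition of the modulus of a complex scalar $z = \tr X$. The entire content of the Fact is therefore reduced to establishing the single identity $\overline{\tr X} = \tr X^\dagger$; once that is in hand the claim follows by direct substitution.

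For that identity I would proceed in three elementary moves, each valid for an arbitrary operator. Since the trace is a finite sum of diagonal entries, $\tr X = \sum_i X_{ii}$, and complex conjugation is additive, it passes through the sum: $\overline{\tr X} = \sum_i \overline{X_{ii}} = \tr \overline{X}$, where $\overline{X}$ denotes the entrywise complex conjugate. Next I would invoke the transpose-invariance of the trace, $\tr Y = \tr Y^T$ for any square $Y$, applied to $Y = \overline{X}$, which gives $\tr \overline{X} = \tr\!\left(\overline{X}^{\,T}\right)$. Finally, by the very definition of the Hermitian conjugate, $X^\dagger = \overline{X}^{\,T}$, so $\tr\!\left(\overline{X}^{\,T}\right) = \tr X^\dagger$. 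Chaining the three equalities yields $\overline{\tr X} = \tr X^\dagger$, and hence $|\tr X|^2 = \tr X\,\tr X^\dagger$.

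There is no genuine obstacle here; the only point meriting any care is to confirm that each step is valid for a \emph{generic} $X$, not merely for Hermitian or normal operators. This holds because all three ingredients — the expression of the trace as $\sum_i X_{ii}$, the additivity of conjugation, and the transpose-invariance $\tr Y = \tr Y^T$ — are purely linear-algebraic and invoke no spectral hypothesis on $X$. This is exactly the generality required when the Fact is applied in the first line of \eqref{eq:gamma_appendix} to $X = \E^{-i(a-a')Bt}\rho_0$, an operator that is in general neither Hermitian nor positive.
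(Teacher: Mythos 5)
Your proposal is correct and follows essentially the same route as the paper's own proof: both reduce the claim to the scalar identity $\overline{\tr X} = \tr X^{\dagger}$ and establish it from the transpose-invariance of the trace together with $X^{\dagger} = \overline{X}^{\,T}$, differing only in whether the conjugation or the transpose is moved through the trace first. Your explicit remark that no Hermiticity or normality of $X$ is needed is a sound (if implicit in the paper) observation, since the Fact is indeed applied to the non-Hermitian operator $\E^{-i(a-a')Bt}\rho_0$.
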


We will also need two more well known facts:
\begin{fact}
\label{eq:aTrVAB}
For any operators $A$, $B$ and the $SWAP$ operator $\mathbb V$, we have that:
\be \label{tr}
\tr [\mathbb \mbV A \otimes B] = \tr (A B).
\ee
\end{fact}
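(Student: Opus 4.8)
The plan is to evaluate the left-hand side directly in a fixed orthonormal basis $\{\ket i\}_{i=1}^d$ of the single factor, exploiting the explicit matrix form of the SWAP operator. Recall that $\mbV$ is defined by its action $\mbV\,\ket\alpha\ot\ket\beta=\ket\beta\ot\ket\alpha$ on product vectors, from which one reads off the representation $\mbV=\sum_{i,j}\ket i\bra j\ot\ket j\bra i$.

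First I would multiply through: $\mbV(A\ot B)=\sum_{i,j}(\ket i\bra j A)\ot(\ket j\bra i B)$. Since the trace over a tensor product factorizes as $\tr(X\ot Y)=\tr X\,\tr Y$, taking the trace gives $\tr[\mbV(A\ot B)]=\sum_{i,j}\tr(\ket i\bra j A)\,\tr(\ket j\bra i B)$. Each single-factor trace collapses to a matrix element, $\tr(\ket i\bra j A)=\bra j A\ket i$ and $\tr(\ket j\bra i B)=\bra i B\ket j$, so the double sum becomes $\sum_{i,j}\bra j A\ket i\,\bra i B\ket j$.

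The final step is to recognize this sum as a trace of a product: summing over $i$ first resolves the identity, $\sum_i\ket i\bra i=\id$, leaving $\sum_j\bra j AB\ket j=\tr(AB)$, which is the claim. Equivalently one may sum over $j$ first to obtain $\tr(BA)$, and cyclicity of the trace reconciles the two expressions. There is essentially no deep obstacle here beyond careful index bookkeeping; the only point requiring minor attention is the correct placement of the swapped indices in the representation of $\mbV$, since an incorrectly ordered form would spuriously yield $\tr A\,\tr B$ rather than $\tr(AB)$. As a cross-check, I would also note a fully basis-free route: computing $\tr[\mbV(A\ot B)]=\sum_{i,j}\bra{i}\bra{j}\mbV(A\ot B)\ket{i}\ket{j}$ and applying the defining swap action $\mbV(A\ket i\ot B\ket j)=(B\ket j)\ot(A\ket i)$ leads to the same matrix-element sum $\sum_{i,j}\bra i B\ket j\,\bra j A\ket i$, confirming the identity independently of any chosen operator form for $\mbV$.
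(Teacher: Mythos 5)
Your proof is correct and follows essentially the same route as the paper's: both expand the SWAP operator in a product basis (your $\mbV=\sum_{i,j}\ket i\bra j\ot\ket j\bra i$ is the same operator as the paper's $\sum_{ij}\ket{ij}\bra{ji}$), evaluate the trace to obtain the matrix-element sum $\sum_{i,j}\bra j A\ket i\bra i B\ket j$, and resolve the identity to get $\tr(AB)$. Your basis-free cross-check via the defining swap action is a nice touch but does not change the substance of the argument.
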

\begin{proof}
Let us write the $SWAP$ operator as:
\be \mathbb V = \sum_{ij} |ij\> \< ji| \label{Sb} \ee
Inserting \eqref{Sb} into \eqref{tr} we have that:
\ben \tr [\mathbb V A \otimes B] &=& \tr \left( \sum_{ij} |ij\> \< ji| A \ot B \right) = \tr \left( \sum_{ij} \< ji| A \ot B |ij\> \right) \nonumber \\
&=&  \sum_{ijkl} \< ji| A |kl\> \< kl| B |ij\>  = \tr (A B).
\een
\end{proof}

\begin{fact}
\label{Haar_int}
For any hermitian operator $X$ from $\mathbb C^d$ to $\mathbb C^d$, it holds: 
\be
\begin{split}
\int{\rm d}{\cal U} U \ot U X \ot X U^{\dagger} \ot U^{\dagger} 
= \frac{2}{d(d+1)} \tr[\Pi_{\text{sym}}X \ot X]\Pi_{\text{sym}} + \frac{2}{d(d-1)}\tr[\Pi_{\text{asym}}X \ot X]\Pi_{\text{asym}},
\end{split}
\ee
where $\Pi_{\text{sym}}$ and $\Pi_{\text{asym}}$ are the orthogonal projectors onto the symmetric and antisymmetric subspaces, respectively, equal to
\be
\Pi_{\text{sym}} \equiv \frac{\mathbb I + \mathbb V}{2}, \quad \Pi_{\text{asym}} \equiv \frac{\mathbb I - \mathbb V}{2},
\ee
where $\mathbb V$ is the $SWAP$ operator.
\end{fact}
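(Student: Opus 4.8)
The plan is to recognize the left-hand side as a \emph{twirl} and exploit its invariance, reducing the problem to two–dimensional linear algebra. Write $\Phi(M)\equiv\int{\rm d}{\cal U}\,(U\ot U)\,M\,(U^\hc\ot U^\hc)$ for the map being applied here to $M=X\ot X$. First I would show that $\Phi(M)$ lies in the commutant of the diagonal action $\{V\ot V\}$: by the left- and right-invariance of the Haar measure, $(V\ot V)\,\Phi(M)\,(V^\hc\ot V^\hc)=\Phi(M)$ for every unitary $V$, hence $[\Phi(M),V\ot V]=0$ for all $V$. By Schur--Weyl duality the commutant of $\{V\ot V:V\in U(d)\}$ on $\C^d\ot\C^d$ is spanned by the identity $\mathbb I$ and the $SWAP$ operator $\mathbb V$, equivalently by the projectors $\Pi_{\text{sym}},\Pi_{\text{asym}}$. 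Therefore $\Phi(M)=\alpha\,\mathbb I+\beta\,\mathbb V$ for some scalars $\alpha,\beta$.

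To pin down $\alpha,\beta$ I would take two independent traces. Using $\tr\mathbb I=d^2$, $\tr\mathbb V=d$ (the latter from Fact~\ref{eq:aTrVAB} with $A=B=\mathbb I$) and $\mathbb V^2=\mathbb I$, the ansatz gives $\tr\Phi(M)=\alpha d^2+\beta d$ and $\tr[\mathbb V\Phi(M)]=\alpha d+\beta d^2$. On the other hand $\Phi$ is trace preserving, so $\tr\Phi(M)=\tr M$; and since $\mathbb V$ commutes with $U\ot U$, cyclicity under the integral yields $\tr[\mathbb V\Phi(M)]=\tr[\mathbb V M]$. Solving the resulting $2\times2$ system gives
\be
\alpha=\frac{d\,\tr M-\tr[\mathbb V M]}{d(d^2-1)},\qquad
\beta=\frac{d\,\tr[\mathbb V M]-\tr M}{d(d^2-1)}.
\ee

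Finally I would convert to the projector basis via $\mathbb I=\Pi_{\text{sym}}+\Pi_{\text{asym}}$ and $\mathbb V=\Pi_{\text{sym}}-\Pi_{\text{asym}}$, so that $\Phi(M)=(\alpha+\beta)\Pi_{\text{sym}}+(\alpha-\beta)\Pi_{\text{asym}}$. A short simplification shows $\alpha+\beta=\tfrac{2}{d(d+1)}\tr[\Pi_{\text{sym}}M]$ and $\alpha-\beta=\tfrac{2}{d(d-1)}\tr[\Pi_{\text{asym}}M]$, using $\tr[\Pi_{\text{sym/asym}}M]=\tfrac12(\tr M\pm\tr[\mathbb V M])$. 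Setting $M=X\ot X$ and invoking Fact~\ref{eq:aTrVAB} once more ($\tr[\mathbb V\,X\ot X]=\tr X^2$) reproduces the claimed formula. The main obstacle is conceptual rather than computational: the entire reduction rests on knowing that the commutant of $\{V\ot V\}$ is exactly two–dimensional. I would either cite Schur--Weyl duality for this, or give a self-contained argument that any operator commuting with all $V\ot V$ must be a linear combination of $\mathbb I$ and $\mathbb V$; everything downstream is the elementary $2\times2$ solve above. An alternative, fully elementary route would bypass representation theory by evaluating the fourth-order Haar moment $\int{\rm d}{\cal U}\,U_{i_1j_1}U_{i_2j_2}\overline{U_{k_1l_1}}\,\overline{U_{k_2l_2}}$ through the Weingarten formula, but this trades the clean structural argument for heavier index bookkeeping.
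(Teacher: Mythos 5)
Your proof is correct. Note that the paper itself does not prove this statement at all --- it is introduced in the appendix as a ``well known fact'' and used without derivation --- so there is no in-paper argument to compare against. What you give is the standard derivation of the two-fold twirling formula: Haar invariance places $\Phi(X\ot X)$ in the commutant of $\{V\ot V\}$, Schur--Weyl duality identifies that commutant as $\operatorname{span}\{\mathbb I,\mathbb V\}$, and the two trace conditions $\tr\Phi(M)=\tr M$ and $\tr[\mathbb V\,\Phi(M)]=\tr[\mathbb V M]$ fix the coefficients; your $2\times 2$ solve and the conversion to the projector basis both check out, and the final identification $\tr[\mathbb V\,X\ot X]=\tr X^2$ is exactly the paper's Fact~\ref{eq:aTrVAB}. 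The only input you rightly flag as needing a citation or a self-contained argument is the two-dimensionality of the commutant (for $d\ge 2$); citing Schur--Weyl there is entirely standard and is what the paper implicitly relies on as well.
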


We now integrate Eq. \eqref{eq:gamma_appendix} over $U \ot U$. Using linearity of the trace we pull the integral inside the trace:
\be
\begin{split}
\< \Gamma_{aa'}(t) \>_{\cal U} &= \int{\rm d}{\cal U} \tr (U^{\dagger} \ot U^{\dagger} \rho_0 \ot \rho_0 U \ot U D \ot D^{\dagger}) 
= \tr \left[ \left( \int{\rm d}{\cal U}U \ot U \rho_0 \ot \rho_0 U^{\dagger} \ot U^{\dagger} \right) D \ot D^{\dagger} \right] \\
\end{split}
\ee
We then use Fact \ref{Haar_int} with $X \equiv \rho_0$. We can easily calculate 
$\tr[\Pi_{\text{sym}}\rho_0 \ot \rho_0]$ and $\tr[\Pi_{\text{asym}}\rho_0 \ot \rho_0]$ 
using Facts \ref{trabab} and \ref{eq:aTrVAB} and obtain:
\be
\label{eq:tr_sym_1}
\tr[\Pi_{\text{sym}}\rho_0 \ot \rho_0]\Pi_{\text{sym}} = \frac{1}{2} \tr \left[ (\mathbb I + \mathbb V)(\rho_0 \ot \rho_0) \right]\frac{\mathbb I + \mathbb V}{2} = \frac{1+\tr[{\rho_0}^2]}{2}\frac{\mathbb I + \mathbb V}{2},
\ee
and for the antisymmetric projector
\be
\label{eq:tr_asym_1}
\tr[\Pi_{\text{asym}}\rho_0 \ot \rho_0]\Pi_{\text{asym}} = \frac{1}{2} \tr \left[ (\mathbb I - \mathbb V)(\rho_0 \ot \rho_0) \right]\frac{\mathbb I - \mathbb V}{2} = \frac{1-\tr[{\rho_0}^2]}{2}\frac{\mathbb I - \mathbb V}{2}.
\ee
We then again use Facts \ref{trabab} and \ref{eq:aTrVAB} to calculate the remaining traces $\tr \left[(\mathbb I \pm \mathbb V) D \ot D^{\dagger}\right]$, keeping in mind that $D$ is hermitian and that $\tr D^2= d$. This finally gives:
\be
\begin{split}
\< \Gamma_{aa'}(t) \>_{\cal U} = |\tr \hat{D}|^2 \frac{d - \tr [\rho^2_0]}{d(d^2-1)} + \frac{d \tr [\rho^2_0] - 1}{d^2-1},
\end{split}
\ee
proving our Theorem. \qed

Using the same technique, one can also calculate the average of the super-fidelity factor. The only non-trivial part is the Hilbert-Schmidt product between
the apparatus states $\rho_a(t)$ and $\rho_{a'}(t)$. Using the same notation as in Eq. \eqref{eq:gamma_appendix} we obtain:
\be
\begin{split}
\label{Sfid}
\tr\left(\rho_a(t)\rho_{a'}(t)\right) &= \tr [\rho_0\E^{i (a-a')B t} \rho_{0}\E^{-i (a-a')B t}] = 
\tr [U^{\dagger} \rho_0 UD^\dagger U^{\dagger} \rho_0 U D] = \tr [\mbV (U^\hc \ot U^\hc) (\rho_0 \ot \rho_0) (U \ot U) (D^\hc \ot D)],
\end{split}
\ee 
where in the last step we used Fact \ref{eq:aTrVAB}. We note that the only difference between the above Hilbert-Schmidt factor and the decoherence factor 
\eqref{eq:gamma_appendix} is the presence of the SWAP operator $\mbV$. Repeating the same steps as above gives:
\be
\<G_{aa'}(t)\>_{\cal U} =S_{lin}(\rho_0)+\frac{d - \tr\rho^2_0}{d^2-1} + |\tr D|^2 \frac{d\tr \rho^2_0-1}{d(d^2-1)}.
\ee

Finally, we evaluate $|\tr D|^2$ from its definition in Eq. \eqref{D}:
\begin{eqnarray}\label{trD}
|\tr D|^2 \equiv d + 2f_t(\pmb a,\pmb{\lambda}), \quad f_t(\pmb a,\pmb{\lambda}) \equiv \sum_m\sum_{n>m} \cos\left[\Delta_t(\lambda_n-\lambda_m)\right] = \sum_m\sum_{n>m}\cos\left[(a-a')(\lambda_n-\lambda_m)t\right] ,
\end{eqnarray}
which is a function of the eigenvalues $\pmb a$ of the observable $A$ and the eigenvalues $\pmb\lambda$ of $B$.

\subsection{Averaging over the eigenvalues} \label{App:b}
After averaging over the unitary group in Sec. \ref{App:a}, we perform the average over the GUE eigenvalue distribution:
\be\label{Pgue}
P_{gue}(\pmb{\lambda}) = \frac{1}{Z} e^{- \frac{1}{2}\eta_E \sum_m \lambda_m^2} \prod_{i<j} |\lambda_j - \lambda_i|^2.
\ee
Here $\eta_E$ is the eigenvalue scale of the observable $B$  and $Z$ is a normalization constant (the GUE partition function). The task then is to find the following average:
\be
\<f_t(\pmb a,\pmb{\lambda})\> = \< \sum_i\sum_{j>i} \cos[\Delta_t(\lambda_i - \lambda_j)] \> = 
\sum_i\sum_{j>i} \int d\pmb{\lambda} \SP P_{gue}(\pmb{\lambda}) \cos[\Delta_t(\lambda_i - \lambda_j)]
\ee
Quite surprisingly, this average can be performed explicitly using the standard methods of dealing with GUE \cite{Mehta_book}.
We first introduce the harmonic oscillator wave functions:
\be\label{phi}
\phi_n (x) \equiv \frac{1}{\sqrt{\sqrt{2\pi} n!}}e^{-\frac{x^2}{4}}He_n(x).
\ee
Notice that we define the wave functions using the so-called "probabilist" Hermite polynomials:
\[
He_n(x) \equiv (-1)^n e^{\frac{1}{2}x^2} \frac{d^n}{dx^n} e^{-\frac{1}{2}x^2} 
\]
That is, they are orthogonal with respect to the weight function $\exp[-x^2/2]$, and are related to the physicist's polynomials $H_n(x)$ via $He_n(x)=2^{-n/2}H_n(x/\sqrt{2})$. 
Of course we still have $\int dx \SP \phi_n(x)\phi_m(x) = \delta_{nm}$. Then the GUE eigenvalue distribution takes on 
a very compact and elegant form, after rescaling $\lambda_k \equiv \frac{\zeta_k}{\sqrt{\eta_E}}$  \cite{Mehta_book} (6.2.4):
\be\label{det}
P_{gue}(\pmb{\lambda})d\pmb{\lambda} = \frac{1}{d!} \det[\phi_{j-1}(\zeta_i)]^2 d\pmb{\zeta}, 
\ee
where $i,j=1,\dots, d$.

\subsubsection{Exploiting the symmetry}
A crucial step is the realization that this average has an index permutation symmetry. Let $\sigma \in S_d$, be a permutation, then:
\be
 P_{gue}(\lambda_{\sigma(1)},\dots,\lambda_{\sigma(d)}) = P_{gue}(\lambda_1,\dots,\lambda_d).
\ee
Analogously we have (keeping the eigenvalues $\pmb a$ fixed):
\be
f_t(\pmb a,\lambda_{\sigma(1)},\dots,\lambda_{\sigma(d)})=\sum_{i<j} \cos[\Delta_t(\lambda_{\sigma(i)} - \lambda_{\sigma(j)})]=\sum_{i<j} \cos[\Delta_t(\lambda_i - \lambda_j)] = f_t(\pmb a,\pmb\lambda).
\ee
This is because in both expressions the eigenvalue functions are symmetric and all pairs of indices are taken (i.e. the product or sum is over all $i<j$). Equivalently, we can recall that 
$\tr D= \sum_i \E^{-\ii \Delta_t \lambda_i}$ and from Eq. \eqref{trD} $f_t(\pmb a,\pmb\lambda)=1/2(|\tr D|^2 -d)$, which is clearly symmetric under the permutations. 
Hence, the calculation of the average $f_t(\pmb a,\pmb{\lambda})$ reduces to a single term:
\be
\<f_t(\pmb a,\pmb{\lambda})\> = \frac{d(d-1)}{2} \int d\pmb{\lambda} \SP P_{gue}(\pmb{\lambda}) \cos[\Delta_t(\lambda_1 - \lambda_2)]
\ee
As the integrand only depends on two variables, we can take the marginal distribution, which is essentially the 2-point correlation function, defined as 
\cite{Mehta_book} (6.1.2):
\be\label{R2}
R_2(\lambda_1,\lambda_2) \equiv \frac{d!}{(d-2)!} \int d\lambda_3 \cdots d\lambda_d \SP P_{gue}(\lambda_1,\dots,\lambda_d).
\ee
Hence, we have reduced the problem to the integral:
\be
\<f_t(\pmb a,\pmb{\lambda})\> = \frac{1}{2} \int d\lambda_1 d\lambda_2 \SP R_2(\lambda_1,\lambda_2) \cos[\Delta_t(\lambda_1 - \lambda_2)].
\label{eq:fintegral}
\ee

\subsubsection{The crucial integral}
To calculate \eqref{eq:fintegral} we will need the following integral:
\be
J_{n,m}(\alpha) \equiv \int dx \SP \phi_n(x)\phi_m(x) e^{\ii \alpha x}.
\label{eq:Jnm}
\ee
In fact \eqref{eq:Jnm} may be interpreted as a special case of a matrix element of the displacement operator 
$D(\beta)=\exp[\beta \hat{a}^\dagger - \beta^*\hat{a}]$ in the Fock basis $\{|n\ra\}_{n\in\mathbb{N}}$. We recover our integral setting  
$\beta \equiv\ii \alpha$, $\alpha \in \mathbb{R}$. This turns out to be a well known quantity in quantum optics (see e.g. \cite{Perelomov}), but for completeness we present its calculation below. 
Without a loss of generality, we will assume $m\geq n$. First, we express the wavefunctions through the (probabilist) Hermite polynomials as in Eq. \eqref{phi} and
use the generating function for $He_n(x)$ with parameters $r,s$ to perform the integral:
\begin{align}
J_{n,m}(\alpha)&= \frac{1}{\sqrt{2\pi n! m!}} \int dx \SP He_n(x) He_m(x) e^{-\frac{1}{2}x^2 + \ii \alpha x} \\
 &= \frac{1}{\sqrt{2\pi n! m!}} \frac{\pd^n }{\pd r^n}\biggr\rvert_{r=0} \frac{\pd^m }{\pd s^m}\biggr\rvert_{s=0} \int dx \SP e^{(\ii \alpha + r + s)x - \frac{1}{2} (r^2 + s^2 + x^2)} \\
 &= \frac{e^{-\frac{1}{2} \alpha^2}}{\sqrt{ n! m!}}\frac{\pd^n }{\pd r^n}\biggr\rvert_{r=0} \frac{\pd^m }{\pd s^m}\biggr\rvert_{s=0} e^{\ii \alpha r + \ii \alpha s + rs} \\
 &=  \frac{e^{-\frac{1}{2} \alpha^2}}{\sqrt{ n! m!}}\frac{\pd^n }{\pd r^n}\biggr\rvert_{r=0}  (\ii \alpha + r)^m e^{\ii \alpha r}
\end{align}
Then we use the binomial formula for the derivatives $\frac{d^n}{dx^n} f(x)g(x) = \sum_{k=0}^n \binom{n}{k} \frac{d^{n-k}f(x)}{dx^{n-k}}\frac{d^{k}g(x)}{dx^k}$. 
Since $m\geq n$ we don't run into any unexpected problems and obtain:
\be
J_{n,m}(\alpha) = \frac{e^{-\frac{1}{2} \alpha^2}}{\sqrt{ n! m!}} \sum_{k=0}^n \binom{n}{k}\binom{m}{k}k! (\ii \alpha)^{n+m-2k}
\label{eq:Jpoly}
\ee
This may be nicely expressed in terms of the associated Laguerre polynomials as (taking $\alpha \in \mathbb{R}$):
\be
J_{n,m}(\alpha) = e^{-\frac{1}{2} \alpha^2} \sqrt{ \frac{n!}{m!}} (\ii \alpha)^{m-n} L_{n}^{(m-n)}(\alpha^2),
\label{eq:JLaguerre}
\ee
where:
\be
L_n^{(m)} (x) \equiv \sum_{k=0}^{n} \binom{n+m}{n-k} \frac{(-x)^k}{k!}
\ee
(we adopt the common standardization for the Laguerre polynomials that the leading coefficient is equal to $(-1)^n/n!$). 
Eq. \eqref{eq:Jnm} can also be expressed more compactly in terms of the, so-called, 2D Laguerre functions introduced in \cite{WunscheLaguerre}: 
\be
\la m | D(\alpha) | n\ra = (-1)^n \sqrt{\pi} l_{m,n}(\alpha,\alpha^*)
\ee
for a general complex displacement $\alpha$. The 2D Laguerre functions are defined as \cite{WunscheLaguerre}:
\be
l_{m,n} (z,z^*) \equiv \frac{1}{\sqrt{\pi}} e^{-\frac{zz^*}{2}} \frac{1}{\sqrt{m!n!}}\sum_{j=0}^m \binom{m}{j}\binom{n}{j}j! (-1)^jz^{m-j}z^{*n-j}.
\ee

\subsubsection{Putting the results together}
We return to calculating the integral \eqref{eq:fintegral}. We use Eq. \eqref{det}, rescale the variables, and 
introduce a more friendly notation $(x,y)\equiv(\zeta_1,\zeta_2)$:
\be
\<f_t(\pmb a,\pmb{\lambda})\> = \frac{1}{2} \int dx dy \SP R_2(x,y) \cos[\tilde{\Delta}_t(x-y)],
\ee
where:
\be
\tilde{\Delta}_t \equiv \frac{(a-a')t}{\sqrt{\eta_E}}.
\ee
Now, Dyson's Theorem will let us calculate the 2-point correlation function \cite{Mehta_book} Thm 5.14 , (6.2.6-7):
\be\label{R21}
R_2(x,y) = K(x,x)K(y,y)-K(x,y)^2,
\ee
where the kernel is defined through the oscillator wave-functions \eqref{phi} as:
\be\label{K}
K(x,y) \equiv \sum_{j=0}^{d-1} \phi_j(x) \phi_j(y).
\ee
Hence, we can express our integral as the following sum:
\be
\<f_t(\pmb a,\pmb{\lambda})\> = \sum_{n=0}^{d-1}\sum_{m=0}^{d-1} \int dxdy \SP \left(\phi_n(x)^2 \phi_m(y)^2 - \phi_n(x)\phi_m(x)\phi_n(y)\phi_m(y) \right) \cos[\tilde{\Delta}_t(x-y)].
\ee
By expressing the cosine function in exponential form, the integrals become separable and we obtain:
\be
\<f_t(\pmb a,\pmb{\lambda})\> = \frac{1}{4} \sum_{n=0}^{d-1}\sum_{m=0}^{d-1} \left(A_{n,m} + \tilde{A}_{n,m} - 2 B_{n,m} \right),
\ee
where we introduced auxiliary functions:
\begin{align}
A_{n,m}&\equiv J_{n,n}(\tilde{\Delta}_t)J_{m,m}(-\tilde{\Delta}_t), \\
\tilde{A}_{n,m}&\equiv J_{n,n}(-\tilde{\Delta}_t)J_{m,m}(\tilde{\Delta}_t), \\
B_{n,m}&\equiv  J_{n,m}(\tilde{\Delta}_t)J_{n,m}(-\tilde{\Delta}_t).
\end{align}
Now, we are able to separate the $n,m$ summation into three parts $n<m$, $n=m$, and $n>m$. From the definition of the auxiliary functions, we easily see that the diagonal summation $n=m$ vanishes. The remaining sums $n<m$ and $n>m$ become the same, since $A_{m,n}=\tilde{A}_{n,m}$ and $B_{m,n}=B_{n,m}$. Hence, for convenience's sake we will calculate the sum $m>n$ only. We use the explicit result \eqref{eq:Jpoly} for the $J_{n,m}$ and obtain:
\be
A_{n,m} = e^{-\tilde{\Delta}_t^2} \sum_{k=0}^{n} \sum_{l=0}^m \binom{n}{k} \binom{m}{l} \frac{(-1)^{n+m-k-l}\tilde{\Delta}_t^{2(n+m-k-l)}}{(n-k)!(m-l)!}.
\ee
And by doing the same for $\tilde{A}$, we indeed realize that $A_{n,m}=\tilde{A}_{n,m}$. For $B_{n,m}$, in turn, we obtain:
\be
B_{n,m} =e^{-\tilde{\Delta}_t^2} \sum_{k=0}^{n} \sum_{l=0}^n \binom{n}{k} \binom{m}{l} \frac{(-1)^{k+l}\tilde{\Delta}_t^{2(n+m-k-l)}}{(m-k)!(n-l)!}.
\ee
Hence, putting it all together (with a factor of 2, since now we only sum $m>n$), we arrive at:
\be\label{av}
\<f_t(\pmb a,\pmb{\lambda})\> =e^{-\tilde{\Delta}_t^2} \sum_{n=0}^{d-2}\sum_{m=n+1}^{d-1} \left[ \sum_{k=0}^{n} \sum_{l=0}^m \binom{n}{k} \binom{m}{l} \frac{(-1)^{n+m-k-l}\tilde{\Delta}_t^{2(n+m-k-l)}}{(n-k)!(m-l)!} - \sum_{k=0}^{n} \sum_{l=0}^n \binom{n}{k} \binom{m}{l} \frac{(-1)^{k+l}\tilde{\Delta}_t^{2(n+m-k-l)}}{(m-k)!(n-l)!}\right].
\ee
This result may also be rewritten using the associated Laguerre polynomials and Eq. \eqref{eq:JLaguerre}:
\be
\<f_t(\pmb a,\pmb{\lambda})\> =e^{-\tilde{\Delta}_t^2} \sum_{n=0}^{d-2}\sum_{m=n+1}^{d-1} \left[ L_n^{(0)}(\tilde{\Delta}_t^2)L_m^{(0)}(\tilde{\Delta}_t^2) - \frac{n!}{m!}\tilde{\Delta}_t^{2(m-n)} [L_n^{(m-n)}(\tilde{\Delta}_t^2) ]^2 \right]
\label{eq:fLaguerre}
\ee
One of the sums can be performed using the following identity for the associated Laguerre polynomials:
\be
\sum_{m=0}^M L^{(\alpha)}_m(x)=L^{(\alpha+1)}_M(x),
\ee
giving:
\be
\<f_t(\pmb a,\pmb{\lambda})\> =e^{-\tilde{\Delta}_t^2} \left[L_{d-1}^{(1)}( \tilde{\Delta}_t^2)L_{d-2}^{(1)}( \tilde{\Delta}_t^2)  - \sum_{n=0}^{d-2} 
L_{n}^{(0)}( \tilde{\Delta}_t^2)L_{n}^{(1)}( \tilde{\Delta}_t^2) 
- \sum_{n=0}^{d-2}\sum_{m=n+1}^{d-1}  \frac{n!}{m!}\tilde{\Delta}_t^{2(m-n)} [L_n^{(m-n)}(\tilde{\Delta}_t^2)]^2 \right]. 
\ee
However, we keep \eqref{eq:fLaguerre} in the main text since it is more compact.

Please note that in the main text we do not directly use $\<f_t(\pmb a,\pmb{\lambda})\>$, but rather separate its Gaussian and polynomial parts, i.e.\ 
\be\label{pd}
\<f_t(\pmb a,\pmb{\lambda})\>\equiv\E^{-\tilde{\Delta}_t^2}p(d,\tilde{\Delta}_t).
\ee

\renewcommand{\theequation}{B\arabic{equation}}
\renewcommand{\thefigure}{B\arabic{figure}}

\section{Short time analysis}\label{shorttime}
In this Section we perform the final averaging over the system observable $A$. In the previous Section, we have used the i.i.d. property
of the apparatus ensemble to reduce the big, compound averages to single copy ones. Here, however, we cannot do so, as ultimately we are interested in the 
macroscopic quantities $\Gamma_{aa'}^{N_{uno}}$, $G_{aa'}^{mac}$. 
Thus we need: 
\be\label{sums}
\<\< X^{f}(t)\>\>_{aa'}\equiv \int d\pmb a P_{gue}(\pmb a) \<X_{aa'}(t)\>^{N_f},
\ee
where $X_{aa'}=\Gamma_{aa'}$ or $G_{aa'}$ and $f=uno$ or $mac$ respectively. We note that both 
$\<\Gamma_{aa'}(t)\>$, $\<G_{aa'}(t)\>$ depend on $A$ only through the eigenvalue differences  $|a-a'|$. 
Thus, the $A$-averaging reduces to averaging over the eigenvalues only with its own GUE eigenvalue distribution:
\be\label{PS}
P_{gue}(\pmb a) \equiv P_{gue}(a_1, \dots,a_{d_S})\equiv \frac{1}{Z_S} e^{- \frac{1}{2}\eta_S \sum_l a_l^2} \prod_{i<j} |a_j - a_i|^2,
\ee
where $d_S$ is the system dimension and $\eta_S$ is the eigenvalue scale of the system observable $A$. From the permutational symmetry 
of the GUE distribution  (best seen through the Vandermonde determinant), the integral in Eq. \eqref{sums} reduces to the integration with the same 2-point correlation function \eqref{R2}, but now defined for the distribution
\eqref{PS} and thus all the averages for different pairs $aa'$ are the same and equal to:
\be\label{finalav}
\<\< X^{f}(t)\>\>=\left[\frac{d_S!}{(d_S-2)!}\right]^{-1} \int dada' R_2(a,a') \<X_{aa'}(t)\>^{N_f}.
\ee 
The resulting integral is too complicated to be performed analytically and actually this is in fact not needed as we see from  Eq. \eqref{pd} that $\<f_t(\pmb a,\pmb\lambda)\>$
will eventually decay so that both factors will approach their noise-floor values (cf. Result 1 from the main text). What we are interested in are the relevant timescales.
We can estimate lower bounds on those timescales from the decay times of \eqref{pd}.
 We will perform this analysis in the following steps: o) assume a short-time limit;
i) approximate the polynomial $p(d,\tilde{\Delta})$ of Eq. \eqref{pd}; ii) approximate the $N_f$ power; iii) using Eqs. \eqref{R21},\eqref{K}, and \eqref{phi}
estimate the fastest decaying term in \eqref{finalav}. This will then give lower bounds on the desired times of the asymptotic approach: The latter are for sure greater than the initial decay times.

First, we assume $\tilde \Delta_t \ll1$, or $t\ll \sqrt{\eta_E}/|a-a'|$. 
The maximum of $|a-a'|$ is of the order of $\sqrt{d_S/\eta_S}$ from the Wigner semi-circle law, defining the short-time limit:
\be\label{stlim}
t\ll \sqrt{\frac{\eta_E\eta_S}{ d_S}}\equiv \frac{1}{g\sqrt{d_S}},
\ee
where $g\equiv 1/\sqrt{\eta_E\eta_S}$ is the effective interaction strength.
 
We now explicitly calculate the coefficients of the lowest order terms in $p(d,\tilde{\Delta})$ directly from Eq. \eqref{av}. One immediately sees that the
polynomial is even so the lowest terms are the constant and the quadratic ones.
The constant term occurs when $k+l=m+n$, but this can only occur in the first summand of the polynomial when $k=n$ and $l=m$, so that we have:
\be
\sum_{n<m} [1] = \frac{d(d-1)}{2}
\ee
The quadratic term occurs when the indices fulfill the condition $k+l+1=m+n$. On the first term, this can occur if ($k=n$, $l=m-1$) or if ($k=n-1$, $l=m$). 
On the second term this can occur only when the $m$ index is $m+1$ and the inner indices are $k=l=n$. Thus we obtain:
\be
\sum_{n<m}\left[m(-1)\tilde{\Delta}_t^2\right] + \sum_{n<m}\left[n(-1)\tilde{\Delta}_t^2\right] - \sum_n\left[(n+1)\tilde{\Delta}_t^2\right] = -\frac{d^2(d-1)}{2}\tilde{\Delta}_t^2
\ee
Thus, for short times we have:
\be\label{pshort}
p(d,\tilde{\Delta}_t) = \frac{d(d-1)}{2}\left( 1 - d \tilde{\Delta}_t^2 \right) + O(\tilde{\Delta}_t^4).
\ee

To proceed further, we upper bound the above expression by the Gaussian function: $1-d\tilde\Delta_t^2\leq\E^{-d\tilde\Delta_t^2}$ resulting from Eq. \eqref{pd} in the short-time bound:
\be\label{ft}
\<f_t(\pmb a,\pmb\lambda)\> \lesssim \E^{-(d+1)\tilde\Delta_t^2}\equiv\E^{-(t/\tau_{aa'})^2},\quad 
\tau_{aa'}\equiv \frac{\sqrt{\eta_E}}{\sqrt{d+1}|a-a'|}.
\ee
We then use the following 
approximation of the power:
\be
\left( \alpha + \beta \E^{-x} \right)^{N_f} \approx (\alpha + \beta)^{N_f} \E^{-N_f\frac{\beta}{\alpha + \beta}x}
\ee
for $x\ll 1$. We apply it to the single-copy averaged factors:
\begin{eqnarray}
&&\< \Gamma_{aa'}(t) \>=  \frac{1+ \tr \rho^2_0 }{d+1}+\< f_t(\pmb a,\pmb{\lambda})\> \frac{2(d - \tr \rho^2_0)}{d(d^2-1)}, \\
&&\< G_{aa'}(t) \>=S_{lin}( \rho_0)+ \frac{1+ \tr \rho^2_0 }{d+1}+\< f_t(\pmb a,\pmb{\lambda})\>  \frac{2(d\tr \rho^2_0-1)}{d(d^2-1)}, 
\end{eqnarray}
identifying $\beta$ with the constant terms and $\alpha$ with the multiplicative one. After a simple algebra, this leads to the following short time approximations
of the macroscopic factors: 
\begin{eqnarray}
&&\< \Gamma^{uno}_{aa'}(t) \>=\<\Gamma_{aa'}(t)\>^{N_{uno}} \approx \exp\left[- N_{uno} \SP \tilde{\Delta}_t^2 \left(d - \<  \tr \rho_0^2 \> \right)\right],\label{eq:ShortTimeGamma} \\
&&\< G^{mac}_{aa'}(t) \>=\< G_{aa'}(t) \>^{N_{mac}}\approx \exp\left[- N_{mac} \SP \tilde{\Delta}_t^2 \left(d \<  \tr \rho_0^2 \> -1 \right)\right]. 
\label{eq:ShortTimeFidelity}
\end{eqnarray}
For $\<  \tr \rho_0^2 \>$ we can use either the Hilbert-Schmidt or Bures measure from the main text. It is interesting to note that in any case $\<  \tr \rho_0^2 \> \propto 1/d$ , so that \eqref{eq:ShortTimeGamma} shows a dependence on both $N_{uno}$ and $d$ while \eqref{eq:ShortTimeFidelity} shows a dependence mainly on $N_{mac}$.

We are now ready to estimate the lower bound on the decay time of the integral \eqref{finalav}. Substituting Eq. \eqref{eq:ShortTimeGamma} or \eqref{eq:ShortTimeFidelity} 
and using Eqs. \eqref{R21}, \eqref{K}, and \eqref{phi}, we are left with a sum of integrals of the following structure:
\be\label{basic}
\int dada' He_m(a) He_n(a) He_j(a')He_l(a')\E^{-\mu_t(a-a')^2-\frac{1}{2}(a^2+a'^2)} \sim \int dada' a^r a'^s \E^{-\mu_t(a-a')^2-\frac{1}{2}(a^2+a'^2)}, 
\ee
where the powers satisfy $0\leq r,s\leq 2(d_S-1)$ (cf. Eq. \eqref{K}) and $\mu_t\equiv N_{uno}(gt)^2\left(d - \<  \tr \rho_0^2 \> \right)$ for the decoherence factor and 
$\mu_t\equiv N_{mac}(gt)^2 \left( d \<  \tr \rho_0^2 \> -1 \right)$ for the super-fidelity (cf. Eqs. (\ref{eq:ShortTimeGamma},\ref{eq:ShortTimeFidelity})).
To separate the integrals, we diagonalize the quadratic form in the exponent. Its eigenvalues read $(1+4\mu_t)/2$ and $1/2$. Denoting by $O$ the diagonalizing $SO(2)$
transformation, we pass to the new variables $(x,y)^T\equiv O(a,a')^T$ in which the integrals \eqref{basic} separate: 
\be
 \int dada' a^r a'^s \E^{-\mu_t(a-a')^2-\frac{1}{2}(a^2+a'^2)}\sim \left(\int dx x^\gamma \E^{-\frac{1+4\mu_t}{2}x^2}\right)
\left(\int dy y^\delta \E^{-\frac{1}{2}y^2}\right),
\ee
where $0\leq \gamma,\delta\leq 2(d_S-1)$, $\gamma+\delta=r+s$. The integrals on the right hand side are now elementary and 
depend on time as $1/(1+4\mu_t)^{\gamma+\epsilon}$, where $\epsilon=1$ or $1/2$ depending on the parity of the power $\gamma$.
Putting all together, the integral \eqref{finalav} has the following time dependence (we neglect possible time dependent coefficients, 
originating from the matrix $O$, as they are given by the sine and cosine functions and hence $\sim O(1)$):
\be
\<\< X^{f}(t)\>\>\sim \frac{1}{\sqrt{1+4\mu_t}}+\frac{1}{1+4\mu_t}+\frac{1}{(1+4\mu_t)^{3/2}}
+\cdots +\frac{1}{(1+4\mu_t)^{2(d_S-1)+\epsilon}}.
\ee
The fastest decaying is the last term, which for $\mu_t\ll 1$ or $t\ll O\left(\frac{1}{\sqrt{g^2N_f}}\right)$ can be approximated by an exponential:
\be
\frac{1}{(1+4\mu_t)^{2(d_S-1)+\epsilon}}\approx\textrm{exp}\left[-8d_S c(d)  N_f (gt)^2\right],
\ee
where we neglected the factor $-2+\epsilon=-1$ or $-3/2$ in the power as compared to $2d_S$ and $c(d)\equiv d - \<  \tr \rho_0^2 \>$
for the decoherence factor and $c(d)\equiv d \<  \tr \rho_0^2 \>-1$ for the super-fidelity. This finally gives the following 
lower bounds on the decay times:
\begin{eqnarray}
&&\tau_{dec}\equiv \left[8g^2N_{uno}d_S\left(d-\<\tr\rho_0^2\>\right)\right]^{-1/2}\sim \left[8g^2N_{uno}d_Sd\right]^{-1/2},\\
&&\tau_{fid}\equiv \left[8g^2N_{mac}d_S\left(d\<\tr\rho_0^2\>-1\right)\right]^{-1/2}\sim\left[8g^2N_{mac}d_S\right]^{-1/2},
\end{eqnarray}
where the simplified estimates are in the limit of a large local dimension of the apparatus $d$. The above times are clearly within the short-time approximation range \eqref{stlim}.
It is interesting to note that the above
separation of physical time-scales of decoherence and information accumulation is, on the mathematical level, a consequence of 
a simple symmetry difference in the initial formulas \eqref{eq:gamma_appendix} and  \eqref{Sfid}. The latter has an additional SWAP operator $\mathbb V$ under
the trace. 

As mentioned, the above timescales are only lower bounds for the actual times of the noise floor approach. Perhaps they can be tightened using different analytical tools,
but for the purpose of this work we will use the above $\tau_{dec},\tau_{fid}$. 
\begin{figure}
\centering
\begin{minipage}{.5\columnwidth}
  \centering
\includegraphics[width=0.9\textwidth]{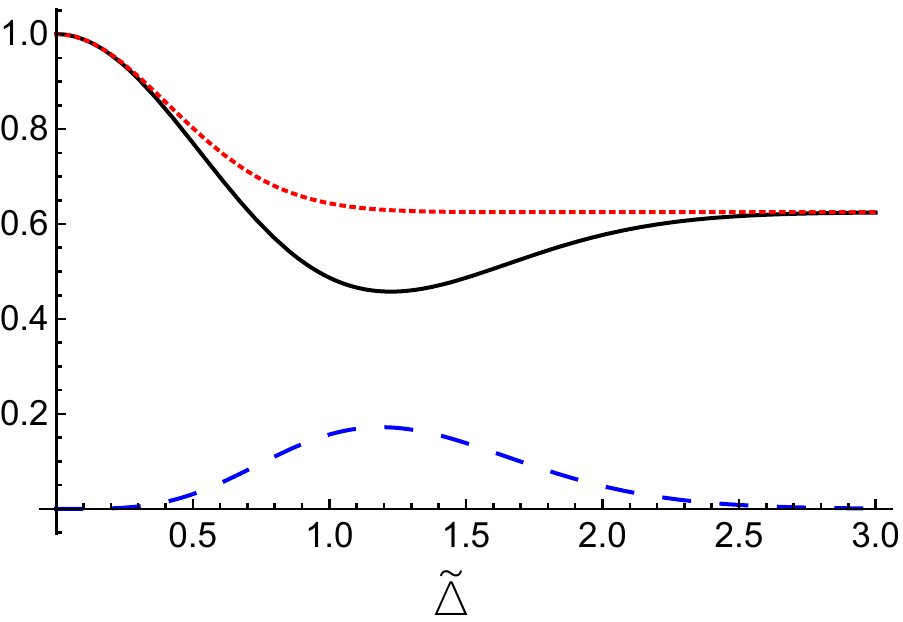}
  \label{fig:deco-ansatz-d2}
    \centerline{(a) $d=2$ }
\end{minipage}%
\begin{minipage}{.5\columnwidth}
  \centering
\includegraphics[width=0.9\textwidth]{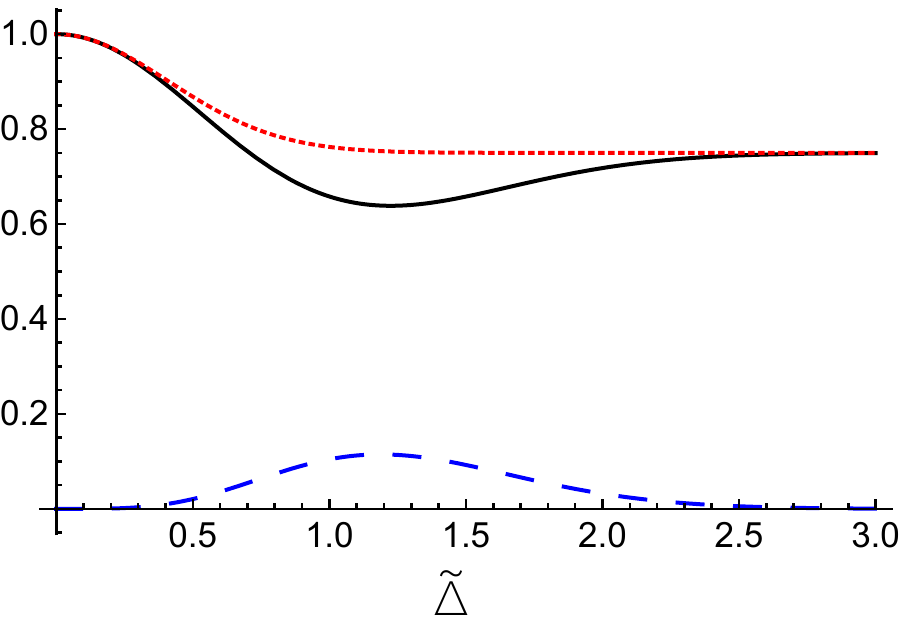}
  \label{fig:fid-ansatz-d2}
    \centerline{(b) $d=2$}
\end{minipage}\\
\vspace{.5cm}
\begin{minipage}{.5\columnwidth}
  \centering
\includegraphics[width=0.9\textwidth]{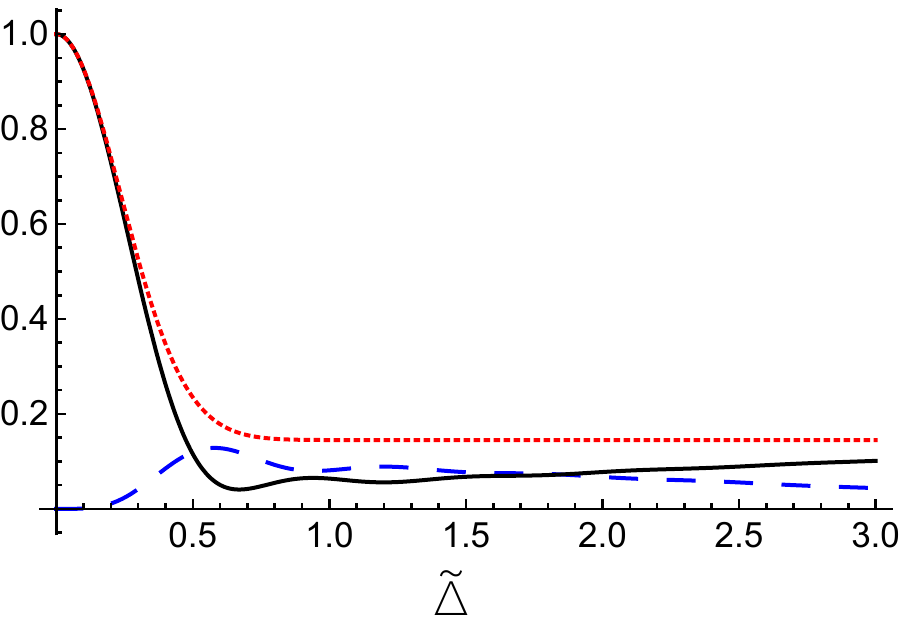}
  \label{fig:deco-ansatz-d8}
    \centerline{(c) $d=8$}
\end{minipage}%
\begin{minipage}{.5\columnwidth}
  \centering
\includegraphics[width=.9\textwidth]{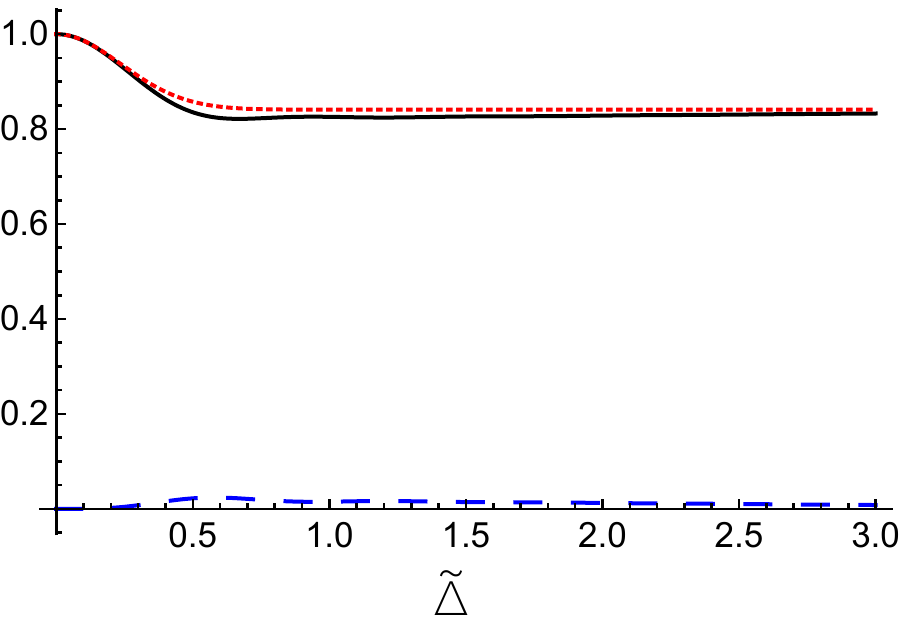}
  \label{fig:fid-ansatz-d8}
    \centerline{(d) $d=8$ }
\end{minipage}
\caption{(Color online). In figures (a) and (c) the dotted (red) curve represents the Ansatz
$\tilde{\Gamma}(t)$, the solid (black) line represents the exact decoherence factor $\<\Gamma(t)\>$, and the dashed (blue) curve the difference $\tilde{\Gamma}(t)-\<\Gamma(t)\>$.
Similarly, in (b) and (d) the dotted (red) curve represents the Ansatz $\tilde{G}(t)$, the solid (black) line represents the exact averaged super-fidelity $\<G(t)\>$, and the dashed (blue)
curve the difference $\tilde{G}(t)-\<G(t)\>$.
The timescale is given in units of $\tilde{\Delta}=(a-a')/\sqrt{\eta_E}$, the Bures average purity is used throughout, and the exact integration is done assuming $d_S=2$.}
\label{fig:ansatzfig}
\end{figure}

\renewcommand{\theequation}{C\arabic{equation}}
\renewcommand{\thefigure}{C\arabic{figure}}

\section{Upperbounding the exact expression for $\<f_t(\pmb a,\pmb\lambda)\>$}\label{ansatz}
This Section is complimentary to the main line of reasoning and is not necessary for understanding it. 
The exact expression \eqref{av}
is algebraically  complicated  and a simplified expression,
reproducing short- and long-time behavior and preferably upper bounding it would be desirable.
Here we attempt a construction of  such an upper bound.
For short times, we may neglect in the average $\< f_t(\pmb a,\pmb{\lambda})\> = p(d,\tilde{\Delta}_t) \E^{-\tilde{\Delta}_t^2}$ 
all terms except the first two giving $\< f_t(\pmb a,\pmb{\lambda})\>\lesssim [d(d-1)/2] \E^{-(d+1)\tilde\Delta_t^2}$ (see Eq. \eqref{ft}).
On the other hand, for large times the highest order term
$\tilde\Delta_t^{2(2d-3)}$ dominates and it always comes with a negative coefficient as can be seen from Eq. \eqref{av}.
These observations suggest  the following  Ansatz:
\begin{eqnarray}
&&\tilde{\Gamma}_{aa'}(t) \equiv \frac{1+\<\tr \rho^2_0\>}{d+1} + \frac{d-\<\tr \rho^2_0\> }{d+1}\E^{-(d+1)\tilde{\Delta}_t^2}, \label{eq:ansatzG}\\
&&\tilde{G}_{aa'}(t) \equiv \<S_{lin}(\rho_0)\>+\frac{1+\<\tr \rho^2_0\>}{d+1} + \frac{d\<\tr \rho^2_0\> -1}{d+1}\E^{-(d+1)\tilde{\Delta}_t^2},\label{eq:ansatzF}
\end{eqnarray}
for the decoherence and the superfidelity factors respectively.
By construction, the expressions (\ref{eq:ansatzG},\ref{eq:ansatzF}) reproduce the correct short time behavior
as $\< \Gamma_{aa'}(t) \>= \tilde{\Gamma}_{aa'}(t) + O(\tilde\Delta_t^4)$, and similarly $\< G_{aa'}(t) \>$,
for $t\ll\tau_{aa'}$, where $\tau_{aa'}$ is the characteristic time of the Gaussian decay in (\ref{eq:ansatzG},\ref{eq:ansatzF}):
\be\label{tdec}
\tau_{aa'}\equiv \frac{\sqrt{\eta_E}}{\sqrt{d+1}|a-a'|}.
\ee


They are also upper bounds since $1-x\leq\E^{-x}$ (cf. (\ref{pshort})). Similarly, for long times $t\gg\tau_{aa'}$,
(\ref{eq:ansatzG},\ref{eq:ansatzF}) reproduce
the correct white noise factors and also upper bound the exact averages, since $p(d,\tilde{\Delta}_t)$ is then negative.
These two facts together suggest that  (\ref{eq:ansatzG},\ref{eq:ansatzF}) may be upper bounds
for all times $t$ and dimensions $d$. Unfortunately we were unable to prove it analytically, but numerical evidence for $d\leq 20$ suggests that it is indeed so.
In Fig.~\ref{fig:ansatzfig} we present some sample plots of both the exact expressions
and (\ref{eq:ansatzG},\ref{eq:ansatzF}) together with the errors.
The price to pay for working with the simplified expressions (\ref{eq:ansatzG},\ref{eq:ansatzF}) is that one loses
the important physical information on non-Markovianity, reflected by the non-monotonic behaviour of the exact averages for low rations $N/d$ .
Also the exact functions approach their asymptotic limits from below, signalizing recovery of coherences/loss of information in the environment,
while (\ref{eq:ansatzG},\ref{eq:ansatzF}) approach them from above. However, if
one is solely interested in an SBS formation, an upper bound decaying to the correct noise level is just enough.

\end{document}